\documentclass[twocolumn]{autart}

\usepackage[utf8]{inputenc}
\usepackage{amsmath,amssymb,amstext,mathtools,amsfonts,bm}
\usepackage{placeins}
\usepackage{graphicx}
\usepackage{epstopdf}
\usepackage{epsfig}
\usepackage[dvipsnames]{xcolor}
\usepackage{enumitem}
\usepackage{empheq}
\usepackage{tabto}
\usepackage{multirow}
\usepackage{tabularx}
\usepackage[misc]{ifsym}
\usepackage{epstopdf}
\usepackage{pifont}
\usepackage{arydshln}
\usepackage{comment}

\usepackage{upgreek}
\usepackage{tcolorbox}
\newtcolorbox{framedBox}{
notitle,
colframe=ForestGreen!10!white,
colback=ForestGreen!10!white,
boxsep=0pt,
top=0.1em,
bottom=0.1em,
left=0pt,
right=0pt,
middle=0pt,
sharp corners=northwest
}

\usepackage{natbib}

\newcommand{\mt}[1]{\bm{\mathsf{#1}}}
\newcommand{\vc}[1]{\bm{\mathsf{#1}}}

\newcommand{\sign}[1]{\ensuremath{\operatorname{sign}\left( {#1} \right)}}

\newcommand{\sat}[1]{\text{sat}\left( #1 \right)}

\newcommand{\sgnpow}[2]{\left\lfloor #1 \right\rceil^{#2}}

\newcommand{\sgn}[1]{\sgnpow{#1}{0}}

\newcommand{\nonum}{\nonumber \\}

\definecolor{dark-green}{HTML}{009B55}

\newcommand{\boldphi}{\boldsymbol{\varphi}}
\newcommand{\boldgamma}{\boldsymbol{\gamma}}
\newcommand{\boldf}{\boldsymbol{f}}

\newcommand{\boldS}{\boldsymbol{S}}
\newcommand{\boldd}{\boldsymbol{d}}

\newcommand{\boldT}{\boldsymbol{T}}

\newcommand{\boldzeta}{\boldsymbol{\zeta}}

\newcommand{\boldalpha}{\boldsymbol{\alpha}}
\newcommand{\boldbeta}{\boldsymbol{\beta}}

\newcommand{\boldA}{\boldsymbol{A}}

\newcommand{\observO}{\boldsymbol{\mathcal{O}}}
\newcommand{\bolde}{\boldsymbol{e}}
\newcommand{\boldt}{\boldsymbol{t}}
\newcommand{\boldtheta}{\boldsymbol{\theta}}
\newcommand{\boldzero}{\boldsymbol{0}}

\makeatletter
\DeclareRobustCommand{\qed}{%
  \ifmmode
    \eqno \def\@badmath{$$}%
    \let\eqno\relax \let\leqno\relax \let\veqno\relax
    \hbox{$\square$}%
  \else
    \leavevmode\unskip\penalty9999 \hbox{}\nobreak\hfill
    \quad\hbox{$\square$}%
  \fi
}
\makeatother

\usepackage{manfnt}

\def\boldalpha{{\boldsymbol{\alpha}}}
\def\boldgamma{{\boldsymbol{\gamma}}}
\def\boldGamma{{\boldsymbol{\Gamma}}}
\def\boldphi{{\boldsymbol{\varphi}}}

\def\boldzero{{\boldsymbol{0}}}

\def\boldzeta{{\boldsymbol{\zeta}}}
\def\boldtheta{{\boldsymbol{\theta}}}
\def\boldbeta{{\boldsymbol{\beta}}}

\def\boldPhi{{\boldsymbol{\Phi}}}

\newcommand{\propertransformedstate}{\eta}

\renewcommand{\epsilon}{\varepsilon}
\renewcommand{\phi}{\varphi}

\def\A{\mt{A}}
\def\B{\mt{B}}

\def\I{\mt{I}}
\def\J{\mt{J}}

\def\M{\mt{M}}
\def\N{\mt{N}}

\def\S{\mt{S}}

\def\a{\vc{a}}

\def\d{\mathrm{d}}
\def\e{\vc{e}}
\def\f{\vc{f}}

\def\k{\vc{k}}

\def\v{\vc{v}}

\begin{document}

\begin{frontmatter}

\title{Model-Based Disturbance Rejection via \\Sliding-Mode Control: Continuous-Time Formulation \\and
Proper Implicit Discretization\thanksref{moco}}

\author[First]{Benedikt Andritsch}\ead{benedikt.andritsch@tugraz.at},
\author[First]{Stefan Koch}\ead{stefan.koch@tugraz.at},
\author[Second]{Jaime A. Moreno}\ead{JMorenoP@ii.unam.mx}

\address[First]{Institute of Automation and Control at Graz University
of Technology, Graz, Austria.}
\address[Second]{Instituto de Ingenier\'{i}a, Universidad Nacional Aut\'onoma de M\'exico (UNAM), Coyoac\'an, D.F., 04510, Mexico}

\thanks[moco]{This work was supported by the Horizon Europe research and innovation programme under the Marie Skłodowska-Curie Grant agreement
101183051.}

\begin{keyword}
robust exact sliding-mode controller, super-twisting control, differentiator, chattering, sampled-data control
\end{keyword}

 \begin{abstract}
We propose a robust exact sliding-mode controller for reference tracking with model-based disturbance rejection, where the disturbance is generated by a linear time-invariant model driven by a bounded input and may be bidirectionally coupled %
with the plant. 
The controller embeds the internal model principle and its error dynamics mirror those of the robust exact differentiator, ensuring finite-time convergence. 
We develop a proper implicit discretization for sampled-data implementation that preserves stability and accuracy. In the nominal case, which contains a disturbance generated by a $p$-th order %
integrator chain, %
the method achieves a control accuracy of $T^{p+1}\bar D$, with discretization time $T$ and disturbance bound $\bar D$.
This accuracy is shown to relate to the fundamental lower bound %
by a factor depending only on the order of the disturbance model. 
Simulations demonstrate the efficacy of the proposed methods.%
\end{abstract}

\end{frontmatter}

\section{Introduction}
Sliding-mode control (SMC) is an established robust control method against matched disturbances that exploits discontinuities in the controller to stabilize the system state on a predefined sliding surface in finite time, i.e., drive a sliding variable to zero, see~\cite{SLOTINE1984,Utkin1992}.
To mitigate chattering phenomena in case of sliding variables with relative degree one, higher-order sliding-mode (HOSM) control (HOSMC) was introduced that lifts the discontinuity to derivatives of the control input, see~\cite{Levant1993,Bartolini1998,Levant2003}. 
A side-effect of these HOSMC concepts is that the class of disturbances that can be rejected by these controllers changes compared to the classical first-order SMC.
In particular, first-order SMC is robust against bounded disturbances, whereas second-order SMC like the super-twisting algorithm (STA) proposed by~\cite{Levant1993} is robust against Lipschitz continuous disturbances.
This Lipschitz-continuous disturbance of the STA can be understood to be generated by an integrator driven by a bounded disturbance input, i.e., a first-order disturbance model.
Over the years, several extensions and generalizations of the STA have been proposed, that all assume some Lipschitz-conditions of the disturbance, see e.g.,~\cite{Shtessel2012,Kamal2014,%
Castillo_2017,RuedaEscobedo2024,Fornaro_2025}.
Also for other HOSM algorithms, variants have been presented that assume Lipschitz-continuity properties of the disturbance, like the continuous twisting algorithm by~\cite{TorresGonzalez2015} and continuous nested sliding mode algorithms by~\cite{Fridman_2015}.
So, HOSMC schemes that generate continuous control signals implicitly incorporate a disturbance model, represented by the class of disturbances that can be rejected.
In parallel, a sliding-mode differentiator, called the robust exact differentiator (RED), was developed by~\cite{Levant1998,Levant2003} to estimate derivatives of a signal in finite time.
The RED incorporates a higher-order disturbance model consisting of a chain of integrators driven by a bounded disturbance input.
Both, the sliding-mode differentiator and continuous HOSMC schemes, therefore apply the internal model control principle in the form of the disturbance model. %

A topic that received a lot of attention in the past is the discrete-time implementation of sliding mode controllers.
It is well-known that applying the straight-forward explicit Euler discretization to SMC leads to high-frequency oscillations, commonly known as discretization chattering, in the steady state, deteriorating the control performance.
Many publications were dedicated to analyzing discretized sliding-mode algorithms and developing discretization methods that avoid this form of chattering, see, e.g.,~\cite{Levant2011,%
Livne_2014,Brogliato2018,Koch2019,%
Hanan2022,Andritsch2024,Seeber_2025}.

In this paper we generalize the application of the internal model control principle in the design of sliding-mode controllers for sliding variables with relative degree one. 
We assume a disturbance generated by an arbitary linear time-invariant system driven by a bounded disturbance input and by the sliding variable itself. %
This disturbance may be unbounded and non-Lipschitz continuous, in comparison to existing approaches. %
A sliding-mode based control concept  is proposed %
that robustly stabilizes the sliding variable in finite time. %
The resulting closed-loop tracking error dynamics %
are equivalent to those of the arbitrary-order RED, %
leading to the denotion as robust exact sliding-mode controller (RESMC).
The continuous-time controller design in this paper is a variation to the design presented by~\cite{AndritschCDC2025}. 
Furthermore, a proper implicit discretization of the RESMC is presented that avoids discretization chattering effects due to sampled-data control.
After introducing the control problem (Section~\ref{sec:problem}), the contributions of this paper are:
\begin{itemize}
    \item design of the RESMC for reference tracking (Section~\ref{sec:remsc});
    \item derivation of a system discretization (Section~\ref{plant-discretization}), that is sample-wise exact in the nominal case;
    \item development of a proper implicit discretization of the RESMC that avoids discretization chattering (Section~\ref{sec:implicit-discretization});
    \item establishing a fundamental lower bound for the best possible worst-case control accuracy (Section~\ref{sec:control-accuracy}).
\end{itemize}
Simulations are demonstrated in Section~\ref{sec:simulations}.
Finally, the paper is concluded in Section~\ref{sec:conclusion}.

\textbf{Notation:}
The sign function is defined as \\${\sign{x} = \begin{cases}
\frac{x}{|x|} &\text{if } x\neq 0,\\
0 & \text{else,}
\end{cases}}$ with $x\in\mathbb{R}$. 
The sign-preserving power function $\sgnpow{x}{a} = \sign{x}|x|^a, x,a\in\mathbb{R}, a\neq 0$ is used, whereas $\sgn{x} = \begin{cases}
\{\sign{x}\} &\text{if } x\neq0,\\
[-1,1] &\text{else}
\end{cases}$ denotes the set-valued sign function.
Bold capital letters refer to matrices, bold lowercase letters to vectors.
$\I$ denotes the identity matrix, $\e_i$ denotes the unit vector with a single non-zero value of one in the $i$-th component, of appropriate sizes.
Given a vector $\a\in\mathbb{R}^{n\times 1}$, %
$a_i$ refers to the $i$-th entry of $\a$.

\section{Problem Description}\label{sec:problem}

In this paper we consider the reference tracking problem of a scalar signal $y$ following the linear disturbed dynamics
\begin{align}\label{output-dynamics}
\dot y = a y + b (u(t)+w(t)),
\end{align}
with known and constant parameters $a,b \in \mathbb{R}$, with $b\neq 0$, control input $u(t) \in \mathbb{R}$ and disturbance $w(t) \in \mathbb{R}$.
The control goal is that the output $y(t)$ tracks the reference signal $r(t) \in \mathbb{R}$ that is differentiable at least once.
The reference $r(t)$ as well as its derivative $\dot r(t)$ are given. 
The dynamics of the tracking error $\sigma(t) = y(t) - r(t)$ are
\begin{align}\label{output-equation}
\dot \sigma &=  a y - \dot r + b (u+w).
\end{align}

\begin{rem}
   Often, the signal $y(t)$ is the output with relative degree one of a higher-order plant. 
   Then, the states of the internal dynamics of this plant must stay bounded when $y(t)$ tracks $r(t)$. 
   For details, see, e.g.,~\cite{Byrnes_1989,Khalil_2002,Shtessel2014}.
   In this paper, however, we focus on the simple first-order setup in~\eqref{output-dynamics}.
   The results can be transferred to controlling the relative degree one output of a higher-order system.
\end{rem}

We assume that the disturbance $w(t)$ can be modeled by the known linear time-invariant system
\begin{align}\label{disturbance-model}
   \dot\boldphi &= \boldS \boldphi + \boldf y + 
   \boldd \Delta(t), \nonum
   w &= \boldgamma^\top \boldphi,
\end{align}
with disturbance state $\boldphi(t)\in\mathbb{R}^p$, matrix $\boldS \in \mathbb{R}^{p\times p}$, vectors $\boldf,\boldd,\boldgamma\in\mathbb{R}^p$, and bounded Lebesgue-measurable unknown input $\Delta(t)$, i.e. $|\Delta(t)| \leq D\,\,\forall t\geq 0$.
We assume that the output $w(t)$ of the  disturbance model~\eqref{disturbance-model} has %
relative degree $p$ with respect to $\Delta(t)$.%

\begin{rem}\label{rem:stc-and-integrator-chain}
   The well-known super-twisting controller by~\cite{Levant1993} is a controller for a special case of the presented setting with an integrator as disturbance model, i.e.~\eqref{disturbance-model} with $p=1$, $\boldS = 0$, $\boldf = 0$, $\boldd = 1$, and $\boldgamma = 1$.
   Then, $w(t)$ is a Lipschitz-continuous function with a bounded derivative.
   Also, the more general form of a signal $w(t)$ with a bounded $p$-th derivative, i.e. $|w^{(p)}| \leq D$, can be expressed via the disturbance model~\eqref{disturbance-model} with arbitrary $p$, $\boldS = \boldzero$, $\boldf = \boldzero$, $\boldd = \e_p$, and $\boldgamma = \e_1$.
\end{rem}

In the first part of this paper, we develop a continuous-time controller for the proposed control problem.
Afterwards, in the main part of this work, we assume that the control is applied to the plant via a  zero-order hold element with constant hold time, i.e.
\begin{align}\label{zoh-control}
   u(t) &= u_k \quad \forall t\in [kT, (k+1)T),\nonum
   y_k &= y(kT),
\end{align}
with discretization time $T>0$, discrete control sequence $u_k, k\in\mathbb{Z}^+_0$, and samples $y_k$ of the system output $y(t)$ at the same time-instances $kT$ as the discrete-time control $u_k$ is applied.
Only these samples $y_k$, and not the entire continuous-time function $y(t)$, are available to the controller.
This control setting is known as sampled-data control in literature, see, e.g.,~\cite{Franklin_1997}.

\subsection{Transformation to Observable Canonical Form}\label{sec-observable-canonical-form}
In the end, the closed-loop error dynamics will be  equivalent to the error dynamics of the RED.
This relation to the RED, which is a sliding-mode observer, is the motivation for transforming the system to observable canonical form.
By combining~\eqref{output-equation} and~\eqref{disturbance-model}, the overall system dynamics can be written as
\begin{align}\label{combined-plant}
   \begin{bmatrix}
\dot\sigma \\ \dot\boldphi
\end{bmatrix} &= \boldA \begin{bmatrix}
\sigma \\ \boldphi
\end{bmatrix} + \begin{bmatrix}
   0 \\ \boldf
\end{bmatrix} r  + \begin{bmatrix}
    1\\\boldzero
\end{bmatrix} (ay - \dot r + b u) + \begin{bmatrix}
0\\\boldd
\end{bmatrix} \Delta,
\end{align}
with
\begin{align}
\boldA  = \begin{bmatrix}
0 & \bar\boldgamma^\top\\
\boldf & \boldS
\end{bmatrix}, \quad \bar\boldgamma = b\boldgamma.
\end{align}
From the relative degree $p$ of $w(t) = \boldgamma^\top\boldphi(t)$ with respect to $\Delta(t)$ and~\eqref{output-equation} we know that $\frac{\sigma(t)}{b}$ has relative degree $p+1$ with respect to $\Delta(t)$. Therefore, system~\eqref{combined-plant} is strongly observable, see~\cite{Hautus_1983}. %
In the following, we transform the linear system~\eqref{combined-plant} to observable canonical form.
The transformed system will later on be used to design the controller.
Recalling the notation $\e_1 = \begin{bmatrix}1&0&\dots &0\end{bmatrix}^\top$, the observability matrix of~\eqref{combined-plant} with output $\sigma$ is given by
\begin{align}\label{observabilitycanonicalform-start}
\observO = \begin{bmatrix}
   \bolde_1^\top\boldA^0\\
   \vdots\\
   \bolde_1^\top\boldA^p
\end{bmatrix}.
\end{align}
From strong observability of~\eqref{combined-plant} follows
\begin{align}
\observO\begin{bmatrix}
0\\\boldd
\end{bmatrix} = \begin{bmatrix}
   0 & \dots & 0 & \mu
\end{bmatrix}^\top,
\end{align}
as described by~\cite{Niederwieser2019}, with
\begin{align}
\mu = \bolde_1^\top\boldA^p \begin{bmatrix}
0\\\boldd
\end{bmatrix} \neq 0.
\end{align}
Define the transformation matrix
\begin{align}
   \boldT = \begin{bmatrix}
      \boldA^p\boldt & \dots & \boldA\boldt & \boldt
   \end{bmatrix},
\end{align}
with the last column of the inverse observability matrix
\begin{align}\label{observabilitycanonicalform-end}
\boldt = \observO^{-1} \bolde_{p+1}.
\end{align}
We transform the states according to
\begin{align}\label{transformation-zeta}
   \boldzeta = \boldT^{-1} \begin{bmatrix}
   \sigma\\\boldphi
   \end{bmatrix}.
\end{align}
We define the vectors
\begin{align}\label{beta1-0}
   \boldbeta = \boldT^{-1}\begin{bmatrix}
0\\\boldf
\end{bmatrix},\,\, \boldtheta = \boldT^{-1}\begin{bmatrix}
1\\\boldzero
\end{bmatrix},\text{ with } \theta_1 = 1,\,\, \beta_1 = 0.
\end{align} 
Using~\eqref{combined-plant},~\eqref{transformation-zeta} and~\eqref{beta1-0} this yields the dynamics in observable canonical form
\begin{align}\label{zeta-system}
   \dot\boldzeta &= \begin{bmatrix}
      -\alpha_1 & 1& 0 & \dots & 0\\
      -\alpha_2 & 0 & 1 & \dots & 0\\
      \vdots & & \ddots & \ddots & \vdots\\
      -\alpha_{p+1} & 0 & 0 & \dots & 0
   \end{bmatrix} \boldzeta + \boldbeta r(t) + \nonum
   &\qquad+\boldtheta \left(bu(t)-\dot r(t) + ay(t)\right) + \begin{bmatrix}\boldzero\\\delta(t)\end{bmatrix},\nonum
   \sigma(t) &= \zeta_1(t),
\end{align}
with some coefficients $\alpha_i\in\mathbb{R}$, and disturbance $\delta(t) = \mu\Delta(t)$ fulfilling
\begin{align}
|\delta(t)| \leq \mu D = \bar D,\, \forall t\geq 0.
\end{align}
The %
problem of making $y(t)$ track $r(t)$ is equivalent to driving $\zeta_1(t)$ of the transformed system~\eqref{zeta-system} to zero.
In the following sections we design the control input $u(t)$ to achieve this goal.

\section{Continuous-Time Robust Exact Sliding-Mode Controller}\label{sec:remsc}

In this section we derive the continuous-time robust exact sliding-mode controller (RESMC). 
This will lead to a continuous-time control signal $u(t)$.
The controller has access to the reference $r(t)$ and the output $y(t)$, and therefore also to the output error $\sigma(t) = \zeta_1(t) = y(t) - r(t)$.
The disturbance $w(t)$ and the state $\boldphi(t)$ of the disturbance model~\eqref{disturbance-model} are not available.

We define the control law
\begin{align}\label{continuous-time-control-law}
    u(t) = \frac{1}{b}\left(\bar u(t) + \dot r(t) - a y(t)\right),
\end{align}
that compensates the known plant dynamics 
and construct the dynamic controller
\begin{align}\label{dynamic-controller}
\bar u(t) &= c_1(t) + \psi_1(\zeta_1) + v_1,\nonum
\dot v_i &= v_{i+1} + c_{i+1}(t) + \psi_{i+1}(\zeta_1),\,\forall i\in\{1,\dots,p-1\},\nonum
\dot v_p &= c_{p+1}(t) + \psi_{p+1}(\zeta_1),
\end{align}
with controller states $v_i(t) \in \mathbb{R}$, and controller functions $c_i(t) \in \mathbb{R}$ and $\psi_i(\zeta_1) \in \mathbb{R}$ to be designed.
The functions $c_i(t)$ are responsible for compensating parts of the disturbance dynamics, whereas
the functions $\psi_i(\zeta_1)$ are responsible for stabilizing the error dynamics.
This controller~\eqref{dynamic-controller} copies the structure of the transformed disturbance dynamics consisting of the 2nd to ${(p+1)\text{-th}}$ row in~\eqref{zeta-system}.

Let the error states be defined as $\varepsilon_1 = \zeta_1$ and ${\varepsilon_i = \zeta_i + v_{i-1}}$, ${i\in\{2,\dots,p+1\}}$.
Combining~\eqref{zeta-system},~\eqref{continuous-time-control-law} and~\eqref{dynamic-controller} while considering $\theta_1 = 1$ and $\beta_1 = 0$ from~\eqref{beta1-0}  yields the error dynamics
\begin{align}\label{intermediate-error-dynamics}
   \dot\varepsilon_1 &= -\alpha_1\varepsilon_1 + \varepsilon_2 + c_1(t) + \psi_1(\varepsilon_1),\nonum
   \dot\varepsilon_i &= -\alpha_{i}\varepsilon_1 + \varepsilon_{i+1} +  \beta_{i}r(t) + \theta_{i}\bar u(t) + \nonum
   &\qquad + c_i(t) + \psi_{i}(\varepsilon_1),\,\forall i\in\{2,\dots,p\},\nonum
   \dot\varepsilon_{p+1} &= -\alpha_{p+1}\varepsilon_1 + \beta_{p+1}r(t) + \theta_{p+1}\bar u(t) + \nonum
   &\qquad + c_{p+1}(t) + \psi_{p+1}(\varepsilon_1) + \delta(t),\nonum
   \sigma(t) &= \varepsilon_1(t).
\end{align}
Now, we select the compensating controller functions
\begin{align}\label{controller-functions-c}
   c_1(t) &= \alpha_1\varepsilon_1,\nonum
   c_i(t) &= \alpha_i\varepsilon_1 - \beta_i r(t) - \theta_{i}\bar u(t), i \in \{2,\dots,p+1\},%
\end{align}
and stabilizing controller functions
\begin{align}\label{controller-functions-g}
   \psi_i(\varepsilon_1) &= - k_i\sgnpow{\varepsilon_1}{\frac{p+1-i}{p+1}}, i \in \{1,\dots,p\},\nonum
   \psi_{p+1}(\varepsilon_1) &\in - k_{p+1}\sgn{\varepsilon_1}.
\end{align}
Inserting the controller functions~\eqref{controller-functions-c} and~\eqref{controller-functions-g} into the error dynamics~\eqref{intermediate-error-dynamics} finally yields the closed-loop error dynamics
\begin{align}\label{continuous-error-dynamics}
   \dot\varepsilon_i &= -k_i \sgnpow{\varepsilon_1}{\frac{p+1-i}{p+1}} + \varepsilon_{i+1}, \,i\in\{1,\dots,p\},\nonum
   \dot\varepsilon_{p+1} &\in -k_{p+1} \sgn{\varepsilon_1} + \delta(t).
\end{align}

\begin{rem}
Note that the controller~\eqref{continuous-time-control-law},~\eqref{dynamic-controller} with controller functions~\eqref{controller-functions-c},~\eqref{controller-functions-g} is a variant of the controller developed by~\cite{AndritschCDC2025}. There, the original disturbance model~\eqref{disturbance-model} is incorporated in the controller. Here, the transformed disturbance model, being represented by an integrator chain and additional influence by the control error $\varepsilon_1$, is incorporated in the controller.
The two variants are equivalent via the state transformation in Section~\ref{sec-observable-canonical-form}.
\end{rem}

\begin{thm}\label{theorem:cont-resmc}
   Consider system~\eqref{output-dynamics} with disturbance model~\eqref{disturbance-model} controlled by~\eqref{continuous-time-control-law},~\eqref{dynamic-controller},~\eqref{controller-functions-c},~\eqref{controller-functions-g}. 
   There exist gains $k_i$, $i\in\{1,\dots,p+1\}$ such that the output $y(t)$ converges to the reference $r(t)$ in finite time and stays there, despite the bounded disturbance $\Delta(t)$.
   Furthermore, the controller states $v_i(t),\,i\in\{1,\dots,p\}$ converge to %
   $\zeta_{i+1}(t)$ from~\eqref{transformation-zeta}.
\end{thm}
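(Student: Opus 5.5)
The plan is to reduce the theorem to the known convergence result for Levant's arbitrary-order robust exact differentiator (RED). The reduction rests on the derivation leading to~\eqref{continuous-error-dynamics}.

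\textbf{Step 1: derive the closed-loop error dynamics.} First I verify that substituting~\eqref{continuous-time-control-law} into~\eqref{zeta-system} gives a clean input channel. The term $bu-\dot r+ay$ reduces to $\bar u$, so the input enters as $\boldtheta\bar u(t)$.

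Next I differentiate $\varepsilon_i=\zeta_i+v_{i-1}$ using~\eqref{dynamic-controller}. Since $\theta_1=1$ and $\beta_1=0$, the first row reads $\dot\varepsilon_1=-\alpha_1\zeta_1+\zeta_2+\bar u$. Inserting $\bar u=c_1+\psi_1+v_1$ and $\varepsilon_2=\zeta_2+v_1$ recovers~\eqref{intermediate-error-dynamics}.

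One subtlety must be addressed here: $c_i$ in~\eqref{controller-functions-c} depends on $\bar u$, and $\bar u$ in turn depends on $c_1$. This is not an algebraic loop, because $c_1=\alpha_1\varepsilon_1$ does not involve $\bar u$. Hence $\bar u$ is computed first, and then $c_2,\dots,c_{p+1}$.

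With this choice the cancellations give exactly~\eqref{continuous-error-dynamics}, where $|\delta(t)|\le\bar D$. These dynamics are understood as a Filippov differential inclusion.

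\textbf{Step 2: invoke the RED convergence result.} System~\eqref{continuous-error-dynamics} is precisely the error dynamics of the $p$-th order RED of~\cite{Levant2003} with Lipschitz constant $\bar D$. By~\cite{Levant2003}, there exist gains $k_i$ for which the origin is globally finite-time stable, uniformly in $\delta$. A standard choice is $k_i=\lambda_i\bar D^{1/(p+1)}$ with $\lambda_i$ taken from Levant's recursive sequence.

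I would recall the argument in brief. The inclusion is homogeneous of degree $-1$ with weights $r_i=p+2-i$. Finite-time stability for $\delta\equiv0$ together with homogeneity yields the robust result. Alternatively, a Lyapunov function as in~\cite{Cruz-Zavala} can be cited.

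I expect this step to be the main obstacle only if a self-contained proof is demanded. The realistic plan is therefore to cite the result rather than reprove it.

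\textbf{Step 3: conclude.} Finite-time stability gives a time $T^*$ such that $\varepsilon(t)=0$ for all $t\ge T^*$. Invariance follows because the origin is an equilibrium of the inclusion, since $\delta\in\bar D[-1,1]\subset k_{p+1}[-1,1]$ when $k_{p+1}>\bar D$.

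From $\varepsilon_1=\sigma=y-r=0$ we obtain finite-time tracking. From $\varepsilon_{i+1}=\zeta_{i+1}+v_i=0$ we conclude that $v_i$ converges in finite time to $-\zeta_{i+1}$.

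This carries a sign caveat. The statement's claim that $v_i$ converges to $\zeta_{i+1}$ must be read as convergence to the negative of $\zeta_{i+1}$, given how $\varepsilon_i$ is defined. Alternatively, the statement holds as written if the error states are instead defined as $\zeta_i-v_{i-1}$, with the signs in~\eqref{dynamic-controller} flipped accordingly. I would state the result in the form consistent with the definition of $\varepsilon_i$.

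Finally, I would note that the solution exists for all $t\ge0$, with no finite escape. This holds because the inclusion has linear growth in $\varepsilon$ and the reference terms are continuous.
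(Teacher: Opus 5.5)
Your proposal is correct and follows the same route as the paper. The paper's proof likewise identifies \eqref{continuous-error-dynamics} with the error dynamics of Levant's robust exact differentiator and invokes its finite-time convergence. Your sign caveat is also right: $\varepsilon_{i+1}=\zeta_{i+1}+v_i=0$ forces $v_i=-\zeta_{i+1}$, which is what the paper itself writes in the proof of Corollary~\ref{cor-bounded-states}, so the ``$v_i\to\zeta_{i+1}$'' in the statement and in the paper's proof of this theorem is a sign slip. One minor correction to a side remark: the gain scaling consistent with homogeneity is $k_i=\lambda_i\bar D^{i/(p+1)}$ (so $k_{p+1}\propto\bar D$), not $\lambda_i\bar D^{1/(p+1)}$.
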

\begin{pf}
   The closed-loop error dynamics~\eqref{continuous-error-dynamics} are identical to the error dynamics of the robust exact differentiator by~\cite{Levant2003,Levant2018} with bounded disturbance $\delta(t)$.
   Therefore, the gains $k_i$, $i\in\{1,\dots,p+1\}$ can be chosen such that $\varepsilon_i(t) = 0\,\forall i\in\{1,\dots,p+1\},\,\forall t>T_c$ for some finite time $T_c>0$ despite the bounded disturbance $\delta(t)$.
   Thus, $y(t) = \sigma(t) + r(t) = \varepsilon_1(t) + r(t) = r(t)$ holds $\forall t>T_c$, and $v_i(t) = \zeta_{i+1}(t),\,i\in\{1,\dots,p\}$.
   \qed
\end{pf}
\begin{cor}\label{cor-bounded-states}
   Consider system~\eqref{output-dynamics} with disturbance model~\eqref{disturbance-model} and controller~\eqref{continuous-time-control-law},~\eqref{dynamic-controller},~\eqref{controller-functions-c},~\eqref{controller-functions-g} with gains $k_i$ fulfilling Theorem~\ref{theorem:cont-resmc}. 
   Let the disturbance states $\boldphi(t)$ be bounded, i.e. $\exists M <\infty: {||\boldphi(t)|| < M}$, ${\forall t\geq0}$.
   Then, the states $\v(t)$ of the controller stay bounded as well, i.e., $\exists N <\infty: {||\v(t)|| < N}$, ${\forall t\geq0}$.
\end{cor}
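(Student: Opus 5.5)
Since the error states are defined by $\varepsilon_{i+1} = \zeta_{i+1} + v_i$, we can write $v_i = \varepsilon_{i+1} - \zeta_{i+1}$ for $i\in\{1,\dots,p\}$. By the triangle inequality, $\|\v(t)\|$ is then bounded by the sum of a bound on the error states $\varepsilon_{i+1}(t)$ and a bound on the transformed states $\zeta_{i+1}(t)$. The proof therefore splits into two independent parts.

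\textbf{Boundedness of $\boldzeta$.} From~\eqref{transformation-zeta},
\[
\boldzeta = \boldT^{-1}\begin{bmatrix}\sigma\\ \boldphi\end{bmatrix}.
\]
The matrix $\boldT$ is constant and invertible: its columns $\boldA^{p}\boldt,\dots,\boldt$ are linearly independent because of observability, which the paper derives from strong observability. Hence
\[
\|\boldzeta\| \le \|\boldT^{-1}\|\,\bigl(|\sigma| + \|\boldphi\|\bigr) < \|\boldT^{-1}\|\,\bigl(\sup_t|\sigma(t)| + M\bigr).
\]
It remains to bound $\sigma = \varepsilon_1$, which is covered by the second part.

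\textbf{Boundedness of $\boldsymbol{\varepsilon}$.} The closed-loop error dynamics~\eqref{continuous-error-dynamics} are autonomous up to the bounded disturbance $\delta$; they do not depend on $r$, $y$ or $\boldphi$. Theorem~\ref{theorem:cont-resmc} gives finite-time convergence, so $\boldsymbol{\varepsilon}(t) = \boldzero$ for all $t > T_c$. On the compact interval $[0,T_c]$, I would argue that Filippov solutions of~\eqref{continuous-error-dynamics} exist and stay bounded, so there is no finite escape time.

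The cleanest route is to invoke the Lyapunov-based convergence proof of the RED cited in the theorem's proof~\cite{Levant2003,Levant2018}. Those arguments rely on homogeneity and a proper Lyapunov function $V$ with $\dot V \le 0$ under the chosen gains. Then $V(\boldsymbol{\varepsilon}(t)) \le V(\boldsymbol{\varepsilon}(0))$ for all $t$, and radial unboundedness of $V$ yields a uniform bound on $\boldsymbol{\varepsilon}(t)$. If one prefers not to rely on a specific Lyapunov function, a direct argument also works. The right-hand side of~\eqref{continuous-error-dynamics} grows at most linearly in $\boldsymbol{\varepsilon}$, since the powers $\frac{p+1-i}{p+1}$ are at most $1$ and the set-valued sign term is bounded by $k_{p+1} + \bar D$. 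A Gronwall-type estimate then bounds $\boldsymbol{\varepsilon}$ on $[0,T_c]$ in terms of $\boldsymbol{\varepsilon}(0)$. Either way, $\sup_{t\ge0}\|\boldsymbol{\varepsilon}(t)\| < \infty$.

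\textbf{Combining the bounds.} Together,
\[
\|\v(t)\| \le \|\boldsymbol{\varepsilon}(t)\| + \|\boldzeta(t)\| < N
\]
for an explicit finite $N$ depending on $M$, $\|\boldT^{-1}\|$ and the bound on $\boldsymbol{\varepsilon}$. The main obstacle is the transient bound on $\boldsymbol{\varepsilon}$ over $[0,T_c]$, because Theorem~\ref{theorem:cont-resmc} only asserts eventual convergence. The linear-growth and Gronwall argument is the fallback that avoids relying on the details of the RED's Lyapunov proof.
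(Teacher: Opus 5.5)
Your proof is correct, but it is organized differently from the paper's.

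The paper splits the argument in time. On $[0,T_c]$ it claims the controller dynamics \eqref{dynamic-controller} cannot escape in finite time because their inputs $c_i(t)$, $\psi_i(\varepsilon_1)$ are bounded. For $t>T_c$ it uses $\varepsilon=\boldzero$, i.e.\ $v_i=-\zeta_{i+1}$ and $\sigma=0$, so $\boldzeta$ is a linear function of the bounded $\boldphi$.

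You instead use the identity $v_i=\varepsilon_{i+1}-\zeta_{i+1}$ for all $t\geq 0$. You bound $\boldzeta$ uniformly through $\boldT^{-1}$, the hypothesis on $\boldphi$, and a bound on $\sigma=\varepsilon_1$. The whole transient question then moves to the error system \eqref{continuous-error-dynamics}, which is decoupled from $r$, $y$, $\boldphi$ and $\v$ and has linear growth, so Gronwall applies directly.

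This route buys some rigor. The paper's claim that the $c_i(t)$ are bounded is not quite right as written: by \eqref{controller-functions-c} they contain $\theta_i\bar u(t)$, and $\bar u$ contains $v_1$ itself. The controller dynamics are really linear in $\v$, driven by $r$ and $\varepsilon_1$. No finite escape still holds, but for that reason. Your argument never needs to examine the controller dynamics at all.

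The paper's route is shorter and exhibits the steady-state relation $v_i=-\zeta_{i+1}$ directly. Your identity reproduces this minus sign, in contrast to the sign in the statement of Theorem~\ref{theorem:cont-resmc}.

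Two small remarks:
\begin{itemize}
\item The transient is less of an obstacle than you suggest. Once solutions exist on $[0,\infty)$, continuity alone bounds $\varepsilon$ on the compact interval $[0,T_c]$. Theorem~\ref{theorem:cont-resmc} presupposes this existence, and your linear-growth estimate guarantees it.
\item Your Lyapunov alternative leans on more than the cited results state explicitly, since Levant's 2003 convergence proof is homogeneity-based rather than built on an explicit radially unbounded $V$. The Gronwall/continuity argument is the one to keep.
\end{itemize}
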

\begin{pf}
   The controller dynamics~\eqref{dynamic-controller} have infinite escape time for bounded input functions $c_i(t)$, $\psi_i(\varepsilon_1)$.
   These functions are bounded due to~\eqref{controller-functions-c},~\eqref{controller-functions-g} and $\varepsilon_1$ being finite-time stable due to Theorem~\ref{theorem:cont-resmc}.
   From Theorem~\ref{theorem:cont-resmc} follows also that after finite time $v_i(t) = -\zeta_{i+1}(t)$ holds $\forall i\in\{1,\dots,p\}$ and $\sigma(t) = 0$. 
   The transformed states $\zeta_i$ result from a linear combination of $\sigma$ and $\boldphi$ through the transformation~\eqref{transformation-zeta}.
   As $\boldphi$ stays bounded, and $\sigma = 0$ after finite time, $\zeta_i$ and thus $v_i$ stay bounded as well.
   \qed
\end{pf}
Note that the boundedness assumption of the disturbance states $\boldphi(t)$ in Corollary~\ref{cor-bounded-states} does not hold in general, e.g., if the system matrix $\S$ in~\eqref{disturbance-model} is not Hurwitz.

In the following sections we restore the assumption of sampled-data control~\eqref{zoh-control}. 

\section{Plant Discretization}\label{plant-discretization}
In this section we derive discrete-time representations of the continuous-time transformed system dynamics~\eqref{zeta-system}.
At first, we consider the nominal case of a disturbed integrator chain, followed by the general case.
These discrete-time representations will be used later on to design discrete-time controllers.

\subsection{Exact Plant Discretization in the Nominal Case}\label{nominal-plant-discretization}
At first, we focus on the nominal case of reference tracking of an integrator that is disturbed by an integrator chain with unknown input via sampled-data control.
The corresponding system description~\eqref{output-equation},~\eqref{disturbance-model} therefore contains the parameters $a = 0$, $\f = \boldzero$, $\S = \begin{bmatrix}
    \boldzero & \I\\0 &\boldzero^\top
\end{bmatrix}$, $\boldgamma = \e_1$, and $\boldd = \e_p$.
Thus, the transformed system~\eqref{zeta-system} takes the form of an integrator chain with $a=0$, and $\alpha_i = \beta_i = 0\,\forall i\in\{1,\dots,p+1\}$.
With the sampled-data control~\eqref{zoh-control}, system~\eqref{zeta-system} then takes the form
\begin{align}\label{nominal-zeta-system}
   \dot\zeta_1(t) &= \zeta_2(t) + b u_k - \dot r(t) \quad\forall t\in [kT,(k+1)T),\nonum
   \zeta_2^{(p)}(t) &= \delta(t),\nonum
   \sigma(t) &= \zeta_1(t).
\end{align}
Note that $\zeta_1(t) = y(t) - r(t)$ and $\zeta_2(t) = bw(t)$ compared to~\eqref{output-dynamics} holds in this case. This means $\zeta_1(t)$ is the output error that should be stabilized despite the disturbance $\zeta_2(t)$ whose $p$-th derivative is bounded.
This control problem was discussed in Remark~\ref{rem:stc-and-integrator-chain}.
In this section, we find a discrete-time representation for system~\eqref{nominal-zeta-system}.

Ultimately, we want to be able to transfer stability arguments from discrete-time to the samples of the continuous-time trajectory.
Therefore, we demand the following property:
\begin{enumerate}
   \item[(P1)] the discrete-time system is sample-wise exact in $\zeta_1$ in the sense that $\zeta_1(kT) = \zeta_{1,k}\,\forall k\geq 0$.
\end{enumerate}
Additionally, the goal is to derive an implicit discretization of the controller.
This means, the discrete-time closed-loop in the end must correspond to an implicit Euler discretization of the continuous-time closed loop.
Thus, we furthermore demand the second property:
\begin{enumerate}
   \item[(P2)] the discrete-time system is in implicit Euler form.
\end{enumerate}
Finally, we want to ensure consistency regarding tuning of the discrete-time controller compared to the continuous-time one. Therefore, we require the third property:
\begin{enumerate}
   \item[(P3)] the discrete-time disturbance $\delta_{k+1}$ has the same bound as the continuous-time disturbance $\delta(t)$, i.e. $|\delta(t)| \leq \bar D\,\forall t>0\,\Rightarrow\,|\delta_{k+1}| \leq \bar D\,\forall k> 0$.
\end{enumerate}

\begin{lem}\label{lemma-nominal-discretization}
   There exists a discrete-time representation of~\eqref{nominal-zeta-system} that fulfills properties (P1)-(P3).
\end{lem}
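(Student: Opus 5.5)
We prove the lemma by constructing the representation explicitly: integrate \eqref{nominal-zeta-system} exactly over one sampling interval, then choose the discrete states so that the result reads as an implicit Euler step. First, the reference enters only through $\dot r$. Integrating the first equation of \eqref{nominal-zeta-system} over $[kT,(k+1)T)$ with $u_k$ held constant gives
\begin{align*}
\zeta_1((k+1)T) = \zeta_1(kT) + T b u_k - \left(r((k+1)T)-r(kT)\right) + \int_{kT}^{(k+1)T}\zeta_2(\tau)\,\d\tau .
\end{align*}
The reference term is known, so we absorb it into the control increment and write $\tfrac{1}{T}(r_{k+1}-r_k)$ in place of $\dot r$. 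The remaining task is to represent the integral term $\int_{kT}^{(k+1)T}\zeta_2\,\d\tau$ by a discrete integrator chain driven by a bounded input.

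Next we define the discrete states as backward divided differences of the integrated disturbance. Let $Z(t)=\int_0^t \zeta_2(\tau)\,\d\tau$, so that $Z^{(p+1)}=\delta$. We set $\zeta_{2,k+1} = \tfrac{1}{T}(Z((k+1)T)-Z(kT))$ and, recursively, $\zeta_{i+1,k+1} = \tfrac{1}{T}(\zeta_{i,k+1}-\zeta_{i,k})$ for $i=2,\dots,p$. We then define $\delta_{k+1} = \tfrac{1}{T}(\zeta_{p+1,k+1}-\zeta_{p+1,k})$. By construction the system becomes
\begin{align*}
\zeta_{1,k+1} &= \zeta_{1,k} + T\left(\zeta_{2,k+1} + b u_k - \tfrac{1}{T}(r_{k+1}-r_k)\right),\\
\zeta_{i,k+1} &= \zeta_{i,k} + T\zeta_{i+1,k+1},\quad i\in\{2,\dots,p\},\\
\zeta_{p+1,k+1} &= \zeta_{p+1,k} + T\delta_{k+1}.
\end{align*}
Every right-hand side is evaluated at the new index $k+1$ (apart from the held input $u_k$), so this is implicit Euler form and (P2) holds. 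The recursion reproduces $\zeta_1(kT)$ exactly, so (P1) holds, provided the chain is initialized consistently from past samples of $Z$; the initial conditions are free, so the representation exists.

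The main obstacle is (P3). Unwinding the recursion, $\delta_{k+1}$ is the $(p+1)$-th backward difference quotient $T^{-(p+1)}\nabla^{p+1}Z((k+1)T)$ of $Z$ on the grid. To bound it, we will use the Peano-kernel (B-spline) representation
\begin{align*}
\nabla^{p+1}Z(t_{k+1}) = \int B_p(\tau)\,Z^{(p+1)}(\tau)\,\d\tau = \int B_p(\tau)\,\delta(\tau)\,\d\tau ,
\end{align*}
where $B_p\ge 0$ is the cardinal B-spline of degree $p$ supported on $[(k-p)T,(k+1)T]$ with $\int B_p = T^{p+1}$. Equivalently, this is the mean-value form of divided differences for absolutely continuous $Z^{(p)}$, which applies since $\delta$ is merely measurable and bounded. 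Nonnegativity of the kernel together with its total mass give $|\delta_{k+1}|\le \bar D$, which establishes (P3). The delicate points are twofold. First, the kernel formula must hold under only $L^\infty$ regularity of $\delta$, which follows from repeated integration. Second, the bound only holds for $k\ge p$ if $Z$ is evaluated only at nonnegative times; for the first $p$ steps we either extend $\delta$ by zero (or by any value bounded by $\bar D$) for $t<0$, or we choose the initial discrete states so that the early $\delta_{k+1}$ still satisfy the bound.
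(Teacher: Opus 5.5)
Your proposal is correct and follows the paper's construction. Like the paper, you integrate $\zeta_1$ exactly, take $\zeta_{2,k+1}$ as the interval average of $\zeta_2$ with the higher states as backward divided differences, and bound $\delta_{k+1}$ through the repeated-integration identity, which is exactly your nonnegative B-spline kernel of total mass $T^{p+1}$. Your backward extension of $\delta$ to $t<0$ is slightly more careful than the paper: the paper sets $\zeta_{i,0}=0$ and only establishes $|\delta_{k+1}|\le\bar D$ for $k\ge p$, which leaves the first steps of (P3) uncovered, whereas your extension yields the bound for every $k\ge 0$.
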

\begin{pf}
The solution of the first equation of~\eqref{nominal-zeta-system} at the sampling instances $t=kT$ using the abbreviation $k^+ = k+1$ is
\begin{align}\label{solution-zeta1kp1}
&\zeta_1(k^+T) = \zeta_{1}(kT) + \int_{kT}^{k^+T} b u_k - \dot r(t) + \zeta_2(t) \,\d t = \nonum
&\quad =  \zeta_{1}(kT) + T b u_k - T\dot r_k %
+ \int_{kT}^{k^+T} \zeta_2(t) \,\d t,
\end{align}
with
\begin{align}
\label{dotrk}
&\dot r_k = \frac{1}{T}\int_{kT}^{k^+T}\dot r(t)\d t = \frac{r(k^+T) - r(kT)}{T}.
\end{align}
To fulfill property (P1), assume $\zeta_{1,i} = \zeta_1(iT)$ holds for $i \leq k$ and subsequently define
\begin{align}\label{definition-zeta1k}
\zeta_{1,k+1} \coloneqq \zeta_1((k+1)T),
\end{align}
with the solution~\eqref{solution-zeta1kp1}.
Further, to take into account property (P2), we define the discrete-time disturbance states as
\begin{align}\label{definition-zeta2k}
&\zeta_{2,k+1} = \frac{1}{T} \int_{kT}^{k^+T} \zeta_2(t) \,\d t,\nonum
&\zeta_{3,k+1} = \frac{\zeta_{2,k+1}-\zeta_{2,k}}{T} = \frac{1}{T^2} \int_{kT}^{k^+T} \zeta_2(t) - \zeta_2(t-T) \,\d t,\nonum
&\qquad\vdots\nonum
&\zeta_{p+1,k+1} = \frac{\zeta_{p,k+1} - \zeta_{p,k}}{T} = \nonum
&\hspace{-0.4em}= \frac{1}{T^p} \int_{kT}^{k^+T} \sum_{i=0}^{p-1}(-1)^i {{p-1} \choose {i}} \zeta_2(t - iT) \,\d t,\,\text{if } k\geq p-1,
\end{align}
with $\zeta_{i,0} = 0$ for $i\in\{2,\dots,p+1\}$, and the disturbance input as
\begin{align}\label{definition-deltak}
&\delta_{k+1} = \frac{1}{T}(\zeta_{p+1,k+1} - \zeta_{p+1,k}) = \nonum
&\quad= \frac{1}{T^{p+1}} \int_{kT}^{k^+T} \sum_{i=0}^{p}(-1)^i {{p} \choose {i}} \zeta_2(t - iT) \,\d t,\,\text{if } k\geq p.
\end{align}
Note that the integrand $\rho(t) = \sum_{i=0}^{p}(-1)^i {{p} \choose {i}} \zeta_2(t - iT)$ for all $t\in[kT,(k+1)T)$ on the right-hand side of $\delta_{k+1}$ for $k\geq p$ corresponds to the $p$-th order forward difference of the signal $\zeta_2(t)$. 
By recursive application of the identity $F(t) - F(t-T) = \int_{0}^{T} f(t-T+\tau) \,\d \tau$ for any function $f(t)$ and its antiderivative $F(t) = \int_0^t f(\tau) \d\tau$, we can rewrite and bound the forward difference by
\begin{align}
&\rho(t) = \int_0^T\dots \int_0^T \delta(t-pT +\tau_1+   \dots + \tau_p) \,\d \tau_1\dots \d \tau_p,\nonum
&\Rightarrow |\rho(t)| \leq \int_0^T\dots \int_0^T \bar D \,\d \tau_1\dots \d \tau_p = T^p \bar D.
\end{align}
Inserting this bound into the definition of $\delta_{k+1}$ in~\eqref{definition-deltak} yields $|\delta_{k+1}| \leq \bar D$, i.e., property (P3) is fulfilled.

Using definitions~\eqref{definition-zeta1k},~\eqref{definition-zeta2k}, and~\eqref{definition-deltak} directly yields the discrete-time system
\begin{align}\label{discrete-time-nominal-plant}
\zeta_{1,k+1} &= \zeta_{1,k} + T \zeta_{2,k+1} + T b u_k,\nonum
\zeta_{i,k+1} &= \zeta_{i,k} + T \zeta_{i+1,k+1},\,i\in\{2,\dots,p\},\nonum
\zeta_{p+1,k+1} &= \zeta_{p+1,k} + T \delta_{k+1},\nonum
\sigma(kT) &= \zeta_{1,k},
\end{align}
which can be seen as an implicit Euler discretization of the system~\eqref{nominal-zeta-system}, i.e., it fulfills (P2), and at the same time fulfills property (P1).
\qed
\end{pf}

\subsection{Approximative Plant Discretization in the General Case}

Consider now the general form of the transformed continuous-time system~\eqref{zeta-system}.
This corresponds to the reference tracking of the output that is disturbed by an arbitrary linear time-invariant system.
In the following, we derive a discrete-time representation of this system under the sampled-data control law~\eqref{zoh-control}.
In this case, however, we cannot find a sample-wise exact discrete-time representation as introduced as the first desired property in Section~\ref{nominal-plant-discretization}, because the system~\eqref{zeta-system} is not in pure feedback form, and we have no continuous knowledge of $\zeta_{1}(t)$. 
Instead, we derive a discrete-time approximation of~\eqref{zeta-system} that after compensation is in implicit Euler structure and has a bounded disturbance, i.e. fulfills the second and third properties from Section~\ref{nominal-plant-discretization}.

Using $y(t) = r(t) + \zeta_1(t)$ and the sampled-data control~\eqref{zoh-control}, system~\eqref{zeta-system} can be written as
\begin{align}\label{general-case-system}
    \dot\boldzeta &= \boldGamma(\boldzeta,t) + \J \boldzeta,\quad \forall t\in [kT, (k+1)T), \text{ with}\nonum
    \boldGamma(\boldzeta,t) &= \M\boldzeta + (\boldbeta + a\boldtheta) r(t) - \boldtheta \dot r(t) + \boldtheta b u_k + \e_{p+1} \delta(t),\nonum
    \M &= (a\e_1-\boldalpha)\e_1^\top, \quad \J = \begin{bmatrix}
      \boldzero & \I\\0 & \boldzero^\top
   \end{bmatrix}.
\end{align}
To obtain a discrete-time approximation of the system~\eqref{general-case-system}, we apply a first-order Lie–Trotter splitting method, see, e.g.,~\cite[Section 2.5]{Hairer2006}, where the $\boldGamma$-subsystem is discretized applying zero-order hold discretization, and the $\J\boldzeta$-subsystem is discretized via backward Euler scheme.
The $\boldGamma$ subsystem is linear in $\boldzeta$. 

Given the discrete state value $\boldzeta_k$ at index $k$, solving at first the $\boldGamma$-subsystem via exact discretization with sampling time $T$ yields a partial solution for the next sampling interval denoted by $\boldzeta_{k+1}^\prime$. This partial solution is computed as
\begin{align}\label{exact-Gamma-solution}
    &\boldzeta^\prime_{k+1} = \boldPhi(T) \boldzeta_k + \int_{kT}^{(k+1)T} \boldPhi\left((k+1)T-\tau\right) \cdot\nonum
    &\,\cdot\left((\boldbeta + a\boldtheta) r(\tau) - \boldtheta \dot r(\tau) + \boldtheta b u_k + \e_{p+1} \delta(\tau)\right) \,\d \tau,
\end{align}
with the %
matrix exponential function 
\begin{align}\label{Gamma-transition-matrix}
    \boldPhi(t) &= e^{\M t}.
\end{align}
Instead of solving the convolution integrals, we approximate the solution~\eqref{exact-Gamma-solution} via a zero-order hold discretization of the system.
Thus, over each sampling interval, we approximate the time-dependent input signals $r(t)$, $\delta(t)$, and $\dot r(t)$ via piecewise constant signals with the values
\begin{align}\label{rk}
&r_k = \begin{cases}
   \int_{kT}^{k^+T}\frac{r(t)}{T}\,\d t \quad\text{if this can be computed explicitly, }\\
   \frac{r(k^+T) + r(kT)}{2} \quad \text{otherwise, }
\end{cases} \\
\label{deltak}
&\delta_{k+1} = \frac{1}{T}\int_{kT}^{k^+T}\delta (t)\d t,
\end{align}
and $\dot r_k$ as in~\eqref{dotrk},
respectively. 
Note that $|\delta_{k+1}| \leq \bar D$ has the same bound as $\delta(t)$, and $\delta_{k+1}$ is unknown.
This approximation is valid, if the corresponding time signals are \emph{slow} enough, or, equivalently, the sampling time $T$ is \emph{small} enough.
This yields the approximative solution of the $\boldGamma$-subsystem being
\begin{align}\label{approximative-Gamma-solution}
    \hat\boldzeta^\prime_{k+1} &= \boldPhi(T) \boldzeta_k + \int_{0}^{T} \boldPhi(\tau) \,\d\tau \,\cdot\nonum
    &\cdot\left((\boldbeta + a\boldtheta) r_k - \boldtheta \dot r_k + \boldtheta b u_k + \e_{p+1} \delta_{k+1}\right).
\end{align}
Note that the system matrix $\M$ of the subsystem contains only zeros except in the first column. 
Therefore, exploiting the series definition of the matrix exponential, the matrix $\boldPhi(t)$ in~\eqref{Gamma-transition-matrix} at $t=T$ takes the form
\begin{align}\label{structure-Phi}
    \boldPhi(T) = \I - T\hat\boldalpha \e_1^\top,
\end{align}
with some vector $\hat\boldalpha$ that is known.
Due to this structure of $\boldPhi(t)$, and $\beta_1=0$ from~\eqref{beta1-0}, the vectors multiplied with the disturbance $\delta_{k+1}$ and the reference $r(t)$ in~\eqref{approximative-Gamma-solution} compute as
\begin{align}\label{disturbance-vector}
    \int_{0}^{T} \boldPhi(\tau) \e_{p+1} \,\d\tau &= T \e_{p+1},\quad
    \int_{0}^{T} \boldPhi(\tau) \boldbeta \,\d\tau = \boldbeta.
\end{align}
Furthermore, looking at the solution~\eqref{approximative-Gamma-solution}, we define the vectors
\begin{align}\label{definition-bar-vectors}
    \bar\boldtheta &= \frac{1}{T} \left( \int_{0}^{T} \boldPhi(\tau) \,\d\tau \right) \boldtheta,\nonum
    \bar\boldalpha &= \hat\boldalpha + a\bar\boldtheta.
\end{align}
Note that in general $\bar\theta_1 \neq 1$, compared to $\theta_1=1$ in the continuous-time system~\eqref{zeta-system}.
Thus, we can write the solution~\eqref{approximative-Gamma-solution} as
\begin{align}\label{partial-approximative-solution}
    \hat\boldzeta^\prime_{k+1} &= \boldzeta_k - T\bar\boldalpha \zeta_{1,k} + T\boldbeta r_k + T\e_{p+1} \delta_{k+1} + \nonum
    &\quad + T\bar\boldtheta (b u_k + a(\zeta_{1,k} + r_k) - \dot r_k).
\end{align}
Note that the solution~\eqref{partial-approximative-solution} has the structure of a forward Euler discretization scheme applied to the $\boldGamma$-subsystem in~\eqref{zeta-system}, only with differently computed vectors~\eqref{definition-bar-vectors}.

Subsequently, following the Lie-Trotter approximation scheme, we apply the $\J\boldzeta$-subsystem to the partial approximative solution~\eqref{partial-approximative-solution}.
For this second subsystem, we apply a backward Euler discretization scheme, giving the overall approximated solution for the next sampling interval as
\begin{align}\label{solution-J-subsystem}
    \boldzeta_{k+1} = \hat\boldzeta^\prime_{k+1} + T\J \boldzeta_{k+1}.
\end{align}
Combining the partial solutions~\eqref{partial-approximative-solution} and~\eqref{solution-J-subsystem}, and considering $\beta_1=0$ from~\eqref{beta1-0}, yields the discrete-time approximation of the continuous-time system~\eqref{zeta-system} as
\begin{align}\label{approximation-zetak}
&\zeta_{1,k+1} = \zeta_{1,k} + T\zeta_{2,k+1} - T\bar\alpha_1\zeta_{1,k} + \nonum
&\qquad + T\bar\theta_1 (b u_k + a (\zeta_{1,k} + r_k) - \dot r_k),\nonum
&\zeta_{i,k+1} = \zeta_{i,k} + T\zeta_{i+1,k+1} - T\bar\alpha_i\zeta_{1,k} + T\beta_i r_k + \nonum
&\qquad + T\bar\theta_i (b u_k + a (\zeta_{1,k} + r_k) - \dot r_k),\, i\in\{2,\dots,p\},\nonum
&\zeta_{p+1,k+1} = \zeta_{p+1,k} - T\bar\alpha_{p+1}\zeta_{1,k} + T\beta_{p+1} r_k + \nonum
&\qquad + T\theta_{p+1} (b u_k + a (\zeta_{1,k} + r_k) - \dot r_k) + T\delta_{k+1}.%
\end{align}
\begin{rem}
    The state $\zeta_{1,k}$ is corrected in every time step using the available measurement $\sigma(kT)$, by setting $\zeta_{1,k} = \sigma(kT)$. %
    The applied Lie-Trotter splitting approximation method is a first-order method, so the error in every step is of order $O(T^2)$.
\end{rem}

\section{Proper Implicit Discretization of the RESMC}\label{sec:implicit-discretization}

In this section, we derive an implicit discretization denoted as \emph{proper implicit discretization} of the continuous-time RESMC~\eqref{dynamic-controller},~\eqref{controller-functions-c},~\eqref{controller-functions-g}.
The \textit{proper implicit discretization} consists of the following steps: 
\begin{enumerate}
    \item Derive a discrete-time plant description that is in implicit Euler form and provides a good approximation of the sampled plant states (performed in Section~\ref{plant-discretization}).
    \item Setup controller structure with dynamics in implicit Euler form and  insert into the discrete-time system dynamics;
    \item Define a state transformation such that the transformed state in the next step is implicitly known in the current step;
    \item Design the controller terms such that the closed-loop dynamics of the transformed coordinates resemble the implicit Euler discretization of the continuous-time error dynamics.
\end{enumerate}

\begin{rem}
The discrete-time controller does not rely on a direct discretization of the controller itself.
Instead, it is reconstructed in discrete time based on a discretization of the desired closed-loop error system.
\end{rem}

\begin{rem}
The third step described above is the fundamental difference to other implicit discretizations in literature, see, e.g.,~\cite{Brogliato2020, Huber2016}.
There, the authors neglect unknown information to obtain an approximation for the state in the next sampling instance.
Such an approximation via neglecting unknowns is not necessary in the presented approach.
\end{rem}

\begin{rem}
All equations that must be implemented to obtain the proposed proper implicit RESMC are embedded in green boxes.
\end{rem}

The controller must be constructed based on the sampled output $y_k$ and the continuous-time reference $r(t)$ only.
With $y_k = y(kT) = \sigma(kT) + r(kT)$ from~\eqref{output-equation}, we know $\zeta_{1}(kT) = \sigma(kT) = y_k - r(kT)$. 
So, the controller has access to the error variable $\zeta_{1}(t)$ at the sampling instances $t=kT$.

Note the structural similarity of the discrete-time plants in the nomincal case~\eqref{discrete-time-nominal-plant} and in the general case~\eqref{approximation-zetak}.
As described in Section~\ref{nominal-plant-discretization}, the nominal case corresponds to $a = 0$, $\boldalpha = \boldbeta = \boldzero$.
Thus, in the nominal case the matrix $\M$ in the general system description~\eqref{general-case-system} results in $\M = \boldzero$ and the matrix~\eqref{Gamma-transition-matrix} becomes $\I$. 
Due to~\eqref{structure-Phi} the vector $\hat\boldalpha = \boldzero$.
This ultimately yields $\bar\boldtheta = \e_1$ and $\bar\boldalpha = \boldzero$ in~\eqref{definition-bar-vectors}.
By using $\bar\boldtheta = \e_1$, $\bar\boldalpha = \boldbeta = \boldzero$, the discrete-time plant in the nominal case~\eqref{discrete-time-nominal-plant} is equivalent to the general discrete-time plant~\eqref{approximation-zetak}.
Therefore, it is sufficient to derive the discrete-time controller for the general discrete-time system~\eqref{approximation-zetak}.

Similar to the continuous-time control law in~\eqref{continuous-time-control-law}, using~\eqref{rk},~\eqref{dotrk} and $\zeta_{1,k} = \sigma(kT)$, we define the discrete-time control law
\begin{framedBox}
\vspace{-1.3em}
\begin{align}\label{discrete-time-control-law}
    u_k = \frac{1}{b}\left( \bar u_k - a (\sigma(kT) + r_k) + \dot r_k \right),
\end{align}
\end{framedBox}
that compensates for the known plant dynamics.

No information is given about the disturbance states $\zeta_{2,k}$ up until $\zeta_{p+1,k}$ or the unknown input $\delta_k$.
The terms related to $\zeta_{1,k}$, $r_k$ and $\dot r_k$ in~\eqref{approximation-zetak}, however, are known and can be compensated.
Thus, we define the discrete-time controller in implicit structure copying the disturbance model part in the discrete-time system~\eqref{approximation-zetak} yielding
\begin{align}\label{discrete-time-general-controller}
   \bar u_k &= \frac{1}{\bar\theta_1} (v_{1,k+1} + c_{1,k} + \psi_{1,k}),\nonum
   v_{i,k+1} &= v_{i,k} + T v_{i+1,k+1} + T c_{i+1,k} + T \psi_{i+1,k},\, \nonum
   &i\in\{1,\dots,p-1\},\nonum
   v_{p,k+1} &= v_{p,k} + Tc_{p+1,k}  + T \psi_{p+1,k} ,
\end{align}
with discrete-time stabilizing controller terms $\psi_{i,k} $ and compensating controller terms $c_{i,k}$ %
to be designed.
Using $\zeta_{1,k} = \sigma(kT)$ we select the compensation terms as
\begin{framedBox}
\vspace{-1.3em}
\begin{align}\label{compensation-functions-general}
   c_{1,k} &= \bar\alpha_1\sigma(kT),\nonum
   c_{i,k} &= \bar c_{i,k} - \bar\theta_i\bar u_k, \text{ with}\nonum
   \bar c_{i,k} &= \bar\alpha_i\sigma(kT) - \beta_i r_k,\,i\in\{2,\dots,p+1\}.
\end{align}
\end{framedBox}

Combining the discrete-time system~\eqref{approximation-zetak} and the discrete-time controller~\eqref{discrete-time-general-controller},~\eqref{compensation-functions-general} with the closed-loop error states $\varepsilon_{i,k} = v_{i-1,k} + \zeta_{i,k}, i\in\{2,\dots,p+1\}$ and $\varepsilon_{1,k} = \zeta_{1,k}$ yields the error dynamics
\begin{align}\label{discrete-time-error-dynamics}
   \varepsilon_{i,k+1} &= \varepsilon_{i,k} + T\varepsilon_{i+1,k+1} + T \psi_{i,k} ,\,i\in\{1,\dots,p\}\nonum
   \varepsilon_{p+1,k+1} &= \varepsilon_{p+1,k} + T\psi_{p+1,k}  + T\delta_{k+1},\nonum
   \sigma(kT) &= \varepsilon_{1,k}.
\end{align} 
Finally, the discrete-time controller terms $\psi_{i,k} $ must be designed.
As a motivation we consider the implicit discretization of the continuous-time closed-loop error dynamics~\eqref{continuous-error-dynamics}, which are given by
\begin{align}\label{implicitly-discretized-error-dynamics}
   \varepsilon^*_{i,k+1} &= \varepsilon^*_{i,k} + T \varepsilon^*_{i+1,k+1} - Tk_i \sgnpow{\varepsilon^*_{1,k+1}}{\frac{p+1-i}{p+1}},\nonum
   &\forall i\in\{1,\dots,p\},\nonum
   \varepsilon^*_{p+1,k+1} &\in \varepsilon^*_{p+1,k} - T k_{p+1} \sgn{\varepsilon^*_{1,k+1}} + T\delta^*_{k+1},
\end{align}
with an unknown discrete-time disturbance $\delta^*_{k+1}$ originating from the continuous-time disturbance $\delta(t)$.
By comparing~\eqref{discrete-time-error-dynamics} and~\eqref{implicitly-discretized-error-dynamics} with $\varepsilon^*_{i,k} = \varepsilon_{i,k}$ we see that the terms $\sgnpow{\varepsilon^*_{1,k+1}}{\frac{p+1-i}{p+1}}$ would make up the discrete-time controller functions. 
Thus, $\varepsilon^*_{1,k+1}$ must be determined. 
This, however, is not possible, as solving~\eqref{implicitly-discretized-error-dynamics} for $\varepsilon^*_{1,k+1}$ contains the unknown errors $\varepsilon^*_{2,k}$ until $\varepsilon^*_{p+1,k}$.
Instead, the controller functions are designed such that the dynamics~\eqref{implicitly-discretized-error-dynamics} are recovered in a transformed state space.

We define the transformed error states by moving all future terms in~\eqref{discrete-time-error-dynamics} to the same side, and obtain
\begin{align}\label{transformed-step-1}
\propertransformedstate_{i,k+1} &= \varepsilon_{i,k+1} - T \varepsilon_{i+1,k+1} =\nonum
&= \varepsilon_{i,k} + T \psi_{i,k} ,\,i\in\{1,\dots,p\}\nonum
\propertransformedstate_{p+1,k+1} &= \varepsilon_{p+1,k+1} - T\delta_{k+1} = \nonum
&=\varepsilon_{p+1,k} + T\psi_{p+1,k},
\end{align}
and thus
\begin{align}\label{definition-eta}
\propertransformedstate_{i,k} &= \varepsilon_{i,k} - T \varepsilon_{i+1,k},\, i\in\{1,\dots,p\},\nonum
\propertransformedstate_{p+1,k} &= \varepsilon_{p+1,k} - T \delta_{k}.
\end{align}
The prediction $\propertransformedstate_{1,k+1} = \varepsilon_{1,k} + T \psi_{1,k} $ is therefore fully known after selecting the controller term $\psi_{1,k}$.
Inserting $\varepsilon_{i,k} = \propertransformedstate_{i,k} + T\varepsilon_{i+1,k}\,\forall i\in\{1,\dots,p\}$ and $\propertransformedstate_{p+1,k} = \varepsilon_{p+1,k} + T\delta_{k}$ from~\eqref{definition-eta} into~\eqref{transformed-step-1} leads to
\begin{align}\label{transformed-step-2}
\propertransformedstate_{i,k+1} &= \propertransformedstate_{i,k} + T\psi_{i,k}  + T\varepsilon_{i+1,k},\,\forall i\in\{1,\dots,p\},\nonum
\propertransformedstate_{p+1,k+1} &= \propertransformedstate_{p+1,k} + T\psi_{p+1,k}  + T\delta_{k}.
\end{align}
Further, using $\varepsilon_{i+1,k} = \propertransformedstate_{i+1,k+1} - T\psi_{i+1,k},\,\forall i\in\{1,\dots,p\}$, from~\eqref{transformed-step-1} yields the transformed error dynamics
\begin{align}\label{transformed-step-3}
\propertransformedstate_{i,k+1} &= \propertransformedstate_{i,k} + T\propertransformedstate_{i+1,k+1} +\nonum
 &+ T\psi_{i,k}  - T^2\psi_{i+1,k},\,\forall i\in\{1,\dots,p\},\nonum
\propertransformedstate_{p+1,k+1} &= \propertransformedstate_{p+1,k} + T\psi_{p+1,k}  + T\delta_{k}.
\end{align}
By designing the controller functions as
\begin{align}\label{discrete-time-controller-functions}
   \psi_{i,k}  &= - k_i \sgnpow{\propertransformedstate_{1,k+1}}{\frac{p+1-i}{p+1}} + T\psi_{i+1,k},\,\nonum
   &i \in \{1,\dots,p\},\nonum
   \psi_{p+1,k}  &\in -k_{p+1}\sgn{\propertransformedstate_{1,k+1}},
\end{align}
with $\propertransformedstate_{1,k+1}$ still to be determined,
we yield the closed-loop transformed error dynamics
\begin{align}\label{discrete-time-closed-loop-error-dynamics}\
\propertransformedstate_{i,k+1} &= \propertransformedstate_{i,k} + T\propertransformedstate_{i+1,k+1} - Tk_i \sgnpow{\propertransformedstate_{1,k+1}}{\frac{p+1-i}{p+1}},\nonum
&\forall i\in\{1,\dots,p\},\nonum
\propertransformedstate_{p+1,k+1} &\in \propertransformedstate_{p+1,k} - Tk_{p+1} \sgn{\propertransformedstate_{1,k+1}} + T\delta_{k}.
\end{align}
These error dynamics~\eqref{discrete-time-closed-loop-error-dynamics} are equivalent to the motivation~\eqref{implicitly-discretized-error-dynamics}, i.e. the implicit discretization of the continuous-time error dynamics of the RESMC~\eqref{continuous-error-dynamics}. 
However, different to~\eqref{implicitly-discretized-error-dynamics}, the controller leading to the error dynamics~\eqref{discrete-time-closed-loop-error-dynamics} can be implemented, as $\propertransformedstate_{1,k+1}$ is known at time step $k$, as shown in the following.

In order to implement the controller~\eqref{discrete-time-general-controller},~\eqref{compensation-functions-general},~\eqref{discrete-time-controller-functions}, $\propertransformedstate_{1,k+1}$ must be determined explicitly. Extending $\propertransformedstate_{1,k+1} = \varepsilon_{1,k} + T \psi_{1,k} $ from~\eqref{transformed-step-1} using~\eqref{discrete-time-controller-functions} gives
\begin{align}
   \varepsilon_{1,k} &\in
\propertransformedstate_{1,k+1} + Tk_1\sgnpow{\propertransformedstate_{1,k+1}}{\frac{p}{p+1}} + \dots\nonum
& \quad + T^{p+1}k_{p+1} \sgn{\propertransformedstate_{1,k+1}}.
\end{align}
Solving this inclusion using $\varepsilon_{1,k} = \sigma(kT)$ yields
\begin{framedBox}
\vspace{-1.3em}
\begin{align}\label{solution-eta1kp1}
   \propertransformedstate_{1,k+1} = \begin{cases}
   0, \qquad\quad \text{ if } |\sigma(kT)| \leq T^{p+1}k_{p+1},\\
   \sign{\sigma(kT)}\lambda^{p+1} \qquad \text{ else,}
   \end{cases}
\end{align}
\end{framedBox}
with $\psi_{p+1,k}  = -\frac{\sigma(kT)}{T^{p+1}}$ from~\eqref{discrete-time-controller-functions} in the first case, and $\lambda>0$ is the solution of the polynomial equation
\begin{framedBox}
\vspace{-1.3em}
\begin{align}\label{polynomial-equation-lambda}
   \lambda^{p+1} + Tk_1\lambda^p + \dots + T^{p+1}k_{p+1} = |\sigma(kT)|
\end{align}
\end{framedBox}
in the second case. By combining both cases, the last controller term computes as
\begin{align}\label{psi_p+1}
   \psi_{p+1,k} = -k_{p+1}\sat{\frac{\sigma(kT)}{T^{p+1}k_{p+1}}}.
\end{align}

\begin{lem}\label{lemma-unique-polynomial-solution}
Equation~\eqref{polynomial-equation-lambda} has a unique positive real solution $\lambda\in\mathbb{R}^+$ if $|\varepsilon_{1,k}| > T^{p+1}k_{p+1}$.
\end{lem}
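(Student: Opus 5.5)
The plan is to study the polynomial $P(\lambda) = \lambda^{p+1} + Tk_1\lambda^p + \dots + T^pk_p\lambda + T^{p+1}k_{p+1}$ on $[0,\infty)$ and show that it is a strictly increasing bijection from $[0,\infty)$ onto $[T^{p+1}k_{p+1},\infty)$. This requires the gains to satisfy $k_i \geq 0$ for $i\in\{1,\dots,p+1\}$, which we take as given, since the RED gains from Theorem~\ref{theorem:cont-resmc} are positive. Once the bijection is established, the claim follows immediately: whenever $|\varepsilon_{1,k}| = |\sigma(kT)| > T^{p+1}k_{p+1}$, the value $|\sigma(kT)|$ lies in the range of $P$ restricted to $(0,\infty)$, and it has exactly one preimage there.

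\textbf{Existence.} Note that $P$ is continuous, $P(0) = T^{p+1}k_{p+1} < |\sigma(kT)|$, and $P(\lambda)\to\infty$ as $\lambda\to\infty$ because the leading coefficient is $1$. The intermediate value theorem then gives some $\lambda>0$ with $P(\lambda) = |\sigma(kT)|$.

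\textbf{Uniqueness.} For $\lambda>0$ the derivative is
\begin{align*}
P'(\lambda) = (p+1)\lambda^p + \sum_{i=1}^{p} (p+1-i)T^ik_i\lambda^{p-i} \geq (p+1)\lambda^p > 0,
\end{align*}
since every term is nonnegative when $T>0$ and $k_i\geq 0$. Thus $P$ is strictly increasing on $(0,\infty)$ and therefore injective, so the positive solution is unique. As an alternative not relying on derivatives, Descartes' rule of signs applied to $P(\lambda)-|\sigma(kT)|$ also works: its coefficient sequence is $1, Tk_1,\dots,T^pk_p, T^{p+1}k_{p+1}-|\sigma(kT)|$, which has exactly one sign change, so there is exactly one positive root.

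\textbf{Main obstacle.} The only point needing care is the sign assumption on the gains, which the lemma leaves implicit. I would state it explicitly and note that for standard RED gain choices every $k_i$ is positive. Finally, I would remark that the complementary case $|\sigma(kT)|\leq T^{p+1}k_{p+1}$ has no positive solution, because $P(\lambda)>P(0)$ for all $\lambda>0$. This is consistent with the case split in \eqref{solution-eta1kp1}.
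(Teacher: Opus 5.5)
Your proof is correct and follows the same route as the paper. Existence comes from continuity together with $P(0)=T^{p+1}k_{p+1}<|\sigma(kT)|$, and uniqueness comes from strict monotonicity of $P$ on $(0,\infty)$; the paper also relies on positive gains for monotonicity, leaving this implicit just as you point out. The only difference is that the paper brackets the root with the explicit point $P(|\varepsilon_{1,k}|^{\frac{1}{p+1}})>|\varepsilon_{1,k}|$ instead of using $P(\lambda)\to\infty$, which also gives the interval $(0,|\varepsilon_{1,k}|^{\frac{1}{p+1}}]$ used to initialize numerical root finding.
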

\begin{pf}
   Denote $p(\bar\lambda) = \bar\lambda^{p+1} + Tk_1\bar\lambda^p + \dots + T^{p+1}k_{p+1}$ as the left-hand side of~\eqref{polynomial-equation-lambda}.
   Note that $p(\bar\lambda)$ is continuous.
   For $\bar\lambda > 0$ every term in $p(\bar\lambda)$ is positive and monotonically increasing. Thus, also the sum $p(\bar\lambda)$ is positive and monotonically increasing. We have ${p(0) = T^{p+1}k_{p+1} < |\varepsilon_{1,k}|}$ and $p(|\varepsilon_{1,k}|^{\frac{1}{p+1}}) > |\varepsilon_{1,k}|$. 
   Due to monotonicity and continuity of $p(\bar\lambda)$ the solution $\lambda\in\mathbb{R}^+$ for which $p(\lambda) = |\varepsilon_{1,k}|$ exists and is unique.
   \qed
\end{pf}
For implementation in higher-order cases, equation~\eqref{polynomial-equation-lambda} must be solved numerically.
Note that then the proof of Lemma~\ref{lemma-unique-polynomial-solution} provides a useful interval for initializing the root finding algorithm.

To yield an explicit formulation of the proper implicit RESMC, at first the controller~\eqref{discrete-time-general-controller},~\eqref{compensation-functions-general} must be solved for the implicitly appearing control value $\bar u_k$.
Repeatedly inserting $v_{i,k+1}$ for $i=1,2,\dots,p+1$ into the control law for $\bar u_k$ in~\eqref{discrete-time-general-controller} and using~\eqref{discrete-time-controller-functions} leads to the equation %
\begin{framedBox}
\vspace{-1.3em}
    \begin{align}\label{solution-uk-general}
   &\bar u_k (\bar\theta_1 + T\bar\theta_2 + \dots + T^p\bar\theta_{p+1}) = \nonum
   &= c_{1,k} + \begin{bmatrix}
   1 & T & \dots & T^{p-1}
   \end{bmatrix}\begin{bmatrix}
      v_{1,k} + T\bar c_{2,k}\\\vdots\\ v_{p,k} + T\bar c_{p+1,k}
   \end{bmatrix} + \nonum
   & + \begin{bmatrix}
      1 & 2T & \dots & (p+1)T^{p}
   \end{bmatrix}\begin{bmatrix}
      -k_1 \sgnpow{\propertransformedstate_{1,k+1}}{\frac{p}{p+1}}\\
      \vdots\\
      -k_p \sgnpow{\propertransformedstate_{1,k+1}}{\frac{1}{p+1}}\\
      -k_{p+1}\sat{\frac{\sigma(kT)}{T^{p+1}k_{p+1}}}
   \end{bmatrix},
   \end{align}
\end{framedBox}
with $\propertransformedstate_{1,k+1}$ as given in~\eqref{solution-eta1kp1}.
Using $\bar u_k$ from~\eqref{solution-uk-general} in~\eqref{compensation-functions-general} yields explicit solutions $c_{i,k}$. Repeatedly inserting $v_{i,k+1}$ for $i=2,\dots,p+1$ on the right-hand side of~\eqref{discrete-time-general-controller} leads to the explicit form of the dynamic part of the proper implicit RESMC being
\begin{framedBox}
\vspace{-1.3em}
    \begin{align}\label{implicit-controller-general}
   &\begin{bmatrix}
      v_{1,k+1}\\v_{2,k+1}\\\vdots\\v_{p,k+1}
   \end{bmatrix} = \begin{bmatrix}
      1 & T & \dots & T^{p-1}\\
      0 & 1 & \dots & T^{p-2}\\
      \vdots & & \ddots & \vdots\\
      0 & 0 & \dots & 1
   \end{bmatrix}\begin{bmatrix}
      v_{1,k} + Tc_{2,k}\\v_{2,k} + Tc_{3,k}\\\vdots\\v_{p,k} + Tc_{p+1,k}
   \end{bmatrix} + \nonum
   &+\begin{bmatrix}
      T & 2T^2 & \dots & pT^{p}\\
      0 & T & \dots & (p-1)T^{p-1}\\
      \vdots &   & \ddots & \vdots\\
      0 & 0 & \dots & T
   \end{bmatrix}\begin{bmatrix}
      -k_2 \sgnpow{\propertransformedstate_{1,k+1}}{\frac{p-1}{p+1}}\\
      \vdots\\
      -k_p \sgnpow{\propertransformedstate_{1,k+1}}{\frac{1}{p+1}}\\
      -k_{p+1}\sat{\frac{\sigma(kT)}{T^{p+1}k_{p+1}}}
   \end{bmatrix}.
   \end{align}
\end{framedBox}

Note that in the nominal case of stabilizing the output that is disturbed by an integrator chain, as described in the beginning of this section, we have $\bar\boldtheta = \e_1$, $\bar\boldalpha = \boldzero$, $\boldbeta = 0$.
This simplifies the controller~\eqref{implicit-controller-general},~\eqref{solution-uk-general},~\eqref{compensation-functions-general} to 

\begin{framedBox}
\vspace{-1.3em}
\begin{align}\label{implicit-controller-nominal}
   &\begin{bmatrix}
      T\bar u_k\\v_{1,k+1}\\v_{2,k+1}\\\vdots\\v_{p,k+1}
   \end{bmatrix} = \begin{bmatrix}
      T & T^2 & \dots & T^{p}\\
      1 & T & \dots & T^{p-1}\\
      0 & 1 & \dots & T^{p-2}\\
      \vdots & & \ddots & \vdots\\
      0 & 0 & \dots & 1
   \end{bmatrix}\begin{bmatrix}
      v_{1,k}\\v_{2,k}\\\vdots\\v_{p,k}
   \end{bmatrix} +  \nonum
   &+\begin{bmatrix}
      T & 2T^2 & \dots & (p+1)T^{p+1}\\
      0 & T & \dots & pT^{p}\\
      \vdots &    & \ddots & \vdots\\
      0 & 0  & \dots & T
   \end{bmatrix}\begin{bmatrix}
      -k_1 \sgnpow{\propertransformedstate_{1,k+1}}{\frac{p}{p+1}}\\
      \vdots\\
      -k_p \sgnpow{\propertransformedstate_{1,k+1}}{\frac{1}{p+1}}\\
      -k_{p+1}\sat{\frac{\sigma(kT)}{T^{p+1}k_{p+1}}}
   \end{bmatrix},
\end{align}
\end{framedBox}
with $\propertransformedstate_{1,k+1}$ as given in~\eqref{solution-eta1kp1}.

\begin{thm}\label{theorem1}
Consider the nominal case of the control problem~\eqref{output-equation},~\eqref{disturbance-model} defined by $a = 0$, $\f = \boldzero$, ${\S = \begin{bmatrix}
    \boldzero & \I\\0 &\boldzero^\top
\end{bmatrix}}$, $\boldgamma = \e_1$, and $\boldd = \e_p$, 
with the disturbance fulfilling $ {|\Delta(t)| \leq D}$.
Furthermore, consider the sampled-data controller~\eqref{zoh-control},~\eqref{implicit-controller-nominal}, \eqref{solution-eta1kp1} applied to~\eqref{output-equation},~\eqref{disturbance-model}, with $\zeta_{1,k} = \varepsilon_{1,k} = \sigma(kT)$.
Then, there exist controller gains $k_i,\,i\in\{1,\dots,p+1\}$ such that for any initial states $\sigma(0), \boldphi(0)$ the samples $\sigma(kT)$ converge to a vicinity of the origin in finite time, specifically
\begin{align}\label{thm-eq-1}
\exists k_c: |\sigma(kT)| \leq T^{p+1} bD\,\forall k\geq k_c.
\end{align}
Finally, in the undisturbed case, i.e. $\Delta(t) = 0\,\forall t$, the samples $\sigma(kT)$ converge to zero in finite time.
\end{thm}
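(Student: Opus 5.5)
The plan is to work entirely with the discrete-time closed loop and to transfer its convergence to the samples via property (P1). In the nominal case, Lemma~\ref{lemma-nominal-discretization} provides the sample-wise exact representation~\eqref{discrete-time-nominal-plant}. Here $\delta_{k+1}$ satisfies $|\delta_{k+1}|\le \bar D$ for $k\ge p$, and for the finitely many earlier steps it is merely bounded. Combining it with the controller~\eqref{implicit-controller-nominal}, \eqref{solution-eta1kp1} gives exactly the error dynamics~\eqref{discrete-time-error-dynamics}. Through the transformation~\eqref{definition-eta}, these become the closed-loop transformed dynamics~\eqref{discrete-time-closed-loop-error-dynamics}, which are the implicit Euler discretization of the RED error dynamics~\eqref{continuous-error-dynamics}, driven by a disturbance bounded by $\bar D$. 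Because everything is exact at the samples, $\sigma(kT)=\varepsilon_{1,k}$ holds for the true plant.

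Next I invoke the known convergence result for the implicitly discretized arbitrary-order RED: for gains $k_i$ chosen as in the continuous-time case (Theorem~\ref{theorem:cont-resmc}), the implicit discretization~\eqref{discrete-time-closed-loop-error-dynamics} converges in finite time. With $|\delta_k|\le\bar D$ it reaches a set where $\eta_{1,k+1}=0$ for all $k\ge k_c$; without disturbance it reaches the origin $\eta_{k}=0$. The implicit schemes of~\cite{Koch2019,Andritsch2024} establish this, in my view, by a homogeneous Lyapunov argument adapted to backward Euler, where implicitness yields exact convergence instead of chattering. This step is the main obstacle. If no ready-made result can be cited, I would argue via the implicit scheme's set-valued fixed point. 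Once $\eta_{1,k+1}=0$, the last equation reads $\eta_{p+1,k+1}\in\eta_{p+1,k}+Tk_{p+1}[-1,1]+T\delta_k$, so $k_{p+1}>\bar D$ keeps the sliding set invariant. Reaching this set in finite time would come from a discrete homogeneous Lyapunov function with strict decrease outside a neighbourhood of size proportional to $T$, together with the deadzone of~\eqref{solution-eta1kp1} absorbing that neighbourhood.

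Finally I recover the bound on $\sigma$ from $\eta_1\equiv0$. From~\eqref{transformed-step-1}, $\eta_{1,k+1}=\varepsilon_{1,k}+T\psi_{1,k}=0$. By the deadzone case of~\eqref{solution-eta1kp1}, $\eta_{1,k+1}=0$ is equivalent to $|\sigma(kT)|\le T^{p+1}k_{p+1}$. This alone is too coarse, so I use the chain instead. Since $\eta_{1,k}=0$ for all large $k$, the first $p$ equations of~\eqref{discrete-time-closed-loop-error-dynamics} force $\eta_{i,k}=0$ for $i\le p$ recursively. The last equation then gives $\eta_{p+1,k+1}=0$, with the set-valued term equal to $-T\delta_k$. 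Unwinding~\eqref{definition-eta} backwards yields $\varepsilon_{p+1,k}=T\delta_k$, then $\varepsilon_{p,k}=T\varepsilon_{p+1,k}=T^2\delta_k$, and so on up to $\varepsilon_{1,k}=T^{p}\varepsilon_{p+1,k}$. The power count needs care here: including the shift in $\eta_{p+1}$, one obtains $\varepsilon_{1,k}=T^{p+1}\delta_{k}$ up to index bookkeeping. Hence $|\sigma(kT)|\le T^{p+1}\bar D$.

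It remains to identify $\bar D$ with $bD$. In the nominal case $\zeta_2=bw$ and $w^{(p)}=\Delta$, so $\mu=b$ and the bound~\eqref{thm-eq-1} follows. For $\Delta\equiv0$ we have $\delta_k=0$, so the same unwinding gives $\sigma(kT)=0$ after finite time.
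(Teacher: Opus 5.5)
Your proposal is correct and follows the paper's proof. Both use the sample-wise exact discretization of Lemma~\ref{lemma-nominal-discretization}, identify the closed loop in the $\eta$-coordinates with the implicit-Euler RED~\eqref{discrete-time-closed-loop-error-dynamics}, import a finite-time convergence result for its origin, and unwind~\eqref{definition-eta} to get $\varepsilon_{1,k}=T^{p+1}\delta_k$ with $\bar D=bD$. The paper takes the imported step from \cite[Proposition~5.10]{Seeber_2025}, which covers arbitrary $p$, whereas \cite{Andritsch2024} treats only the super-twisting case $p=1$. Your fallback sketch is therefore unnecessary, and it would not work as written: $\{\eta_1=0\}$ is not invariant by itself, and practical decrease outside an $O(T)$ neighbourhood plus the deadzone on $\sigma(kT)$ does not by itself give exact convergence.
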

\begin{pf}
System~\eqref{output-equation},~\eqref{disturbance-model} in this setting is equivalent to the transformed system~\eqref{nominal-zeta-system}. 
System~\eqref{discrete-time-nominal-plant} is a discrete-time representation of the continuous-time system~\eqref{nominal-zeta-system} fulfilling all properties in Lemma~\ref{lemma-nominal-discretization}, with $\delta(t) = b \Delta(t)$ and thus their bounds fulfill $b D = \bar D$.
For properly selected gains $k_i,\,i\in\{1,\dots,p+1\}$ the origin of the transformed error system~\eqref{discrete-time-closed-loop-error-dynamics} is finite-time stable. The proof is based on a discrete-time Lyapunov function and was developed in~\cite[Proposition 5.10]{Seeber_2025}.
Assume $\propertransformedstate_{i,k} = 0\,\forall i\in\{1,\dots,p+1\}, k\geq k_c$ for some $k_c\in\mathbb{Z}^+$.
From transformation~\eqref{definition-eta} and property (P3) of Lemma~\ref{lemma-nominal-discretization} follows that 
\begin{align}
|\varepsilon_{i,k}| \leq T^{p+2-i}\bar D\,\forall k \geq k_c, i\in\{1,\dots,p+1\}
\end{align}
and specifically~\eqref{thm-eq-1}, as $\varepsilon_{1,k} = \zeta_{1,k} = \zeta_1(kT) = \sigma(kT)$ due to property (P1) in Lemma~\ref{lemma-nominal-discretization}.
The last claim in the theorem is a direct consequence of~\eqref{thm-eq-1} with $D=0$.
\qed
\end{pf}

\begin{rem}
    In case $a \neq 0$, a similar result as in Theorem~\ref{theorem1} can be obtained by applying, e.g., the exact compensation technique presented by~\cite{Andritsch2025_Springer}.
\end{rem}

\begin{prop}
   If $|\zeta_{1,k}| \leq T^{p+1}\bar D$ holds, 
   the proper implicit RESMC~\eqref{implicit-controller-general},~\eqref{solution-uk-general},~\eqref{solution-eta1kp1},~\eqref{compensation-functions-general},~\eqref{discrete-time-control-law} is a deadbeat controller for the discrete-time system~\eqref{approximation-zetak}, i.e. the discrete-time closed-loop error dynamics~\eqref{discrete-time-error-dynamics} are a linear system with input $\delta_{k+1}$ and all its eigenvalues are zero. 
\end{prop}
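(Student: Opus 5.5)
The plan is to show that in the regime $|\zeta_{1,k}|\le T^{p+1}\bar D$ every nonlinear term of the controller switches off, so that \eqref{discrete-time-error-dynamics} becomes a linear recursion. Its system matrix I then intend to show is nilpotent by computing its characteristic polynomial.

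\emph{Step 1 (reduction to the saturated branch).} I assume, as is implicit in the statement and standard for the RED tuning, that the gains satisfy $k_{p+1}\ge \bar D$. Then $|\sigma(kT)|=|\varepsilon_{1,k}|\le T^{p+1}k_{p+1}$, so \eqref{solution-eta1kp1} gives $\propertransformedstate_{1,k+1}=0$. Also \eqref{psi_p+1} gives $\psi_{p+1,k}=-\varepsilon_{1,k}/T^{p+1}$. Inserting this into \eqref{discrete-time-controller-functions}, all power terms vanish and backward recursion yields $\psi_{i,k}=T^{p+1-i}\psi_{p+1,k}=-\varepsilon_{1,k}/T^{i}$ for $i\in\{1,\dots,p+1\}$. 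Hence \eqref{discrete-time-error-dynamics} reads
\begin{align*}
(\I-T\J)\,\boldsymbol{\varepsilon}_{k+1} = (\I - T\vc{c}\,\e_1^\top)\boldsymbol{\varepsilon}_k + T\e_{p+1}\delta_{k+1},\quad c_i=T^{-i}.
\end{align*}
This is linear in $\boldsymbol{\varepsilon}_k$ with input $\delta_{k+1}$. Its system matrix is $\A_T=(\I-T\J)^{-1}(\I-T\vc{c}\e_1^\top)$, and $\I-T\J$ is unipotent, hence invertible.

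\emph{Step 2 (removing $T$).} I apply the similarity transformation $\mt{D}=\mathrm{diag}(1,T,\dots,T^p)$. Since $\mt{D}\J\mt{D}^{-1}=T^{-1}\J$, I get $\mt{D}(\I-T\J)^{-1}\mt{D}^{-1}=(\I-\J)^{-1}=:\mt{U}$, the upper triangular all-ones matrix. Moreover $T\mt{D}\vc{c}\,\e_1^\top\mt{D}^{-1}=\mathbf{1}\e_1^\top$. Thus $\A_T$ is similar to $\B=\mt{U}(\I-\mathbf{1}\e_1^\top)=\mt{U}-\vc{w}\e_1^\top$ with $\vc{w}=\mt{U}\mathbf{1}=(p+1,p,\dots,1)^\top$. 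The eigenvalues are therefore independent of $T$. As a sanity check, for $p=1$ one has $\B=\begin{bmatrix}-1&1\\-1&1\end{bmatrix}$, which is nilpotent.

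\emph{Step 3 (characteristic polynomial, main step).} By the matrix determinant lemma,
\begin{align*}
\det(z\I-\B)=(z-1)^{p+1}\bigl(1+\e_1^\top(z\I-\mt{U})^{-1}\vc{w}\bigr).
\end{align*}
Writing $z\I-\mt{U}=\mt{U}\bigl((z-1)\I-z\J\bigr)$ gives $(z\I-\mt{U})^{-1}\vc{w}=\bigl((z-1)\I-z\J\bigr)^{-1}\mathbf{1}$. The first row of this inverse is $\frac{1}{z-1}[1,q,\dots,q^p]$ with $q=\frac{z}{z-1}$, obtained from the Neumann series of the nilpotent part. Hence
\begin{align*}
\det(z\I-\B)=(z-1)^{p+1}+\sum_{j=0}^{p}z^j(z-1)^{p-j}=(z-1)^{p+1}+\bigl(z^{p+1}-(z-1)^{p+1}\bigr)=z^{p+1},
\end{align*}
where the middle equality is the geometric-sum identity $\sum_j x^jy^{p-j}=(x^{p+1}-y^{p+1})/(x-y)$ with $x-y=1$. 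Consequently all eigenvalues are zero, $\A_T^{p+1}=\boldzero$, and the controller is deadbeat.

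I expect Step 3 to be the main obstacle, in particular keeping the triangular inverse bookkeeping clean. A fallback is to verify nilpotency through the $\propertransformedstate$-coordinates, using $\propertransformedstate_{1,k+1}=0$ and the triangular structure of \eqref{transformed-step-3}. One should also note that the regime condition must persist along the trajectory for the linear description to hold at each step, which is part of the hypothesis.
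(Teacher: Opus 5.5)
Your proof is correct. Step 1 matches the paper: you reduce to the branch $\propertransformedstate_{1,k+1}=0$, $\psi_{p+1,k}=-\varepsilon_{1,k}/T^{p+1}$, and obtain the same linear recursion. Stating $k_{p+1}\ge\bar D$ explicitly is a useful addition, since the first case of \eqref{solution-eta1kp1} requires $T^{p+1}\bar D\le T^{p+1}k_{p+1}$ and the paper uses this only tacitly.

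Where you differ from the paper is in how nilpotency is shown.

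\textbf{The paper's route.} It writes the closed-loop matrix as $\A+\v\e_1^\top$ with $\A=(\I-T\N)^{-1}$ and conjugates by $\A$. This means passing to the coordinates $(\I-T\N)\boldsymbol{\varepsilon}$, which are essentially the $\propertransformedstate$-coordinates of your fallback. In these coordinates the matrix becomes $\A+(\A^{-1}\v)(\e_1^\top\A)$. Its entries are $-T^{j-i}$ for $j<i$ and zero for $j\ge i$. Strict lower triangularity then gives nilpotency immediately, with no determinant computation.

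\textbf{Your route.} You rescale by $\mathrm{diag}(1,T,\dots,T^p)$ to remove $T$. You then compute $\det(z\I-\B)=z^{p+1}$ using the matrix determinant lemma, the identity $(z\I-\mt{U})^{-1}\vc{w}=((z-1)\I-z\J)^{-1}\mathbf{1}$, and the telescoping geometric sum. All of these steps check out. The formula, derived for $z\neq 1$, extends to $z=1$ because both sides are polynomials.

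\textbf{Comparison.} The paper's argument is shorter. It also gives an explicit nilpotent normal form tied directly to the transformed error coordinates used in the controller design. Yours is more computational, but it makes the $T$-independence of the spectrum transparent and delivers the full characteristic polynomial. From that, $\A_T^{p+1}=\boldzero$ follows by Cayley--Hamilton.
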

\begin{pf}
    Note that $\zeta_{1,k} = \varepsilon_{1,k}$.
    Equations~\eqref{solution-eta1kp1},~\eqref{psi_p+1} for $|\zeta_{1,k}| \leq T^{p+1}\bar D$ yield $\propertransformedstate_{1,k+1} = 0$ and $\psi_{p+1,k} = -\frac{\varepsilon_{1,k}}{T^{p+1}}$.
    Inserting the controller functions~\eqref{discrete-time-controller-functions} into~\eqref{discrete-time-error-dynamics} in this case yields the closed-loop error dynamics
    \begin{align}
      \begin{bmatrix}
         \varepsilon_{1,k+1}\\
         \varepsilon_{2,k+1}\\
         \vdots\\
         \varepsilon_{p+1,k+1}\\
      \end{bmatrix} &= \begin{bmatrix}
         -p & T & \dots & T^p\\
         - \frac{p}{T} & 1 & \dots & T^{p-1}\\
         \vdots & &\ddots&\vdots\\
         - \frac{1}{T^p} & 0 & \dots & 1
      \end{bmatrix}
      \begin{bmatrix}
         \varepsilon_{1,k}\\
         \varepsilon_{2,k}\\
         \vdots\\
         \varepsilon_{p+1,k}\\
      \end{bmatrix} + \nonum
      &\quad +\begin{bmatrix}
         T^{p+1}&T^p&\dots&T
      \end{bmatrix}^\top\delta_{k+1}.
    \end{align}
    The system matrix $\M$ of these dynamics can be expressed as
    \begin{align}
        \A &= \begin{bmatrix}
            1 & T & \dots & T^p\\
            0 & 1 & \dots & T^{p-1}\\
            \vdots & & \ddots & \vdots\\
            0 & 0 & \dots & 1
        \end{bmatrix},\,\B = \begin{bmatrix}
            - (p+1) & 0 & \dots & 0\\
            -\frac{p}{T} & 0 & \dots & 0\\
            \vdots & & \ddots & \vdots\\
            -\frac{1}{T^p} & 0 & \dots & 0
        \end{bmatrix},\nonum
        \M &= %
        \A + \B.
    \end{align}
    With the vector $\v = \begin{bmatrix}
        -p & -\frac{p-1}{T} & \dots & -\frac{1}{T^p}
    \end{bmatrix}^\top$ and the shift matrix $\N = \begin{bmatrix}
        \boldzero & \I\\0 & \boldzero^\top
    \end{bmatrix}$,
    we have $\A = \sum_{i=0}^{p}(T\N)^i$, and $\B = \v\e_1^\top$.
    As $\N$ is nilpotent with $\N^{p+1} = \boldzero$, also $\A$ is nilpotent with $\A^{p+1} = \boldzero$ and furthermore, as $\A$ is a Neumann series, we have $\A = (\I-T\N)^{-1}$.
    We consider the similarity transformation
    \begin{align}\label{similarity-S}
        \S = \A^{-1}\M\A = \A + (\A^{-1}\v)(\e_1^\top\A),
    \end{align}
    and compute
    \begin{align}
        \A^{-1}\v &= (\I - T\N)\v = \begin{bmatrix}-1 & -\frac{1}{T} & \dots & -\frac{1}{T^p}\end{bmatrix}^\top,\nonum
        \e_1^\top\A &= \begin{bmatrix}1 & T & \dots & T^p\end{bmatrix}.
    \end{align}
    Therefore, $\S$ in~\eqref{similarity-S} computes as
    a strict lower triangular matrix with zeros on the diagonal.
    Consequently, $\S^{p+1} = \boldzero$ and furthermore due to the similarity transformation $\M^n = \A \S^n \A^{-1} = \boldzero$.
    Thus, $\M$ is nilpotent and all its eigenvalues are zero.
    \qed
\end{pf}

\begin{rem}
In the nominal case with the presented sample-wise exact discrete-time representation~\eqref{discrete-time-nominal-plant} of system~\eqref{nominal-zeta-system}, the input disturbance $\delta_{k+1}$ is defined via an integral containing values of $\zeta_2(t)$ for $t\in[(k-p)T, (k+1)T]$. 
Thus, consecutive values of $\delta_{k+1}$ are not independent.
For example, $\delta_{k+1}$ can not change from $-\bar D$ to $\bar D$ in one step.
This dependence could possibly be used to design a controller with a higher accuracy than the proposed controller for $p>1$.
\end{rem}

\begin{rem}
In the nominal case with $p=1$, the resulting proper implicit RESMC~\eqref{implicit-controller-nominal} coincides with the proper implicit discretization of the super-twisting controller (STC) proposed by~\cite{Seeber2025} and the modified implicit discretization of the STC presented by~\cite{Andritsch2024}.
\end{rem}

\section{Control Accuracy in the Nominal Case}\label{sec:control-accuracy}

Inspired from~\cite[Proposition~2.1]{Seeber2025} we formulate the following proposition that establishes a lower bound on the best possible worst-case control accuracy in the nominal case of stabilizing the system output that is disturbed by an integrator chain with unknown input.
\begin{prop}\label{prop:best-possible-accuracy}
   Consider the dynamics~\eqref{nominal-zeta-system} with bounded unknown input $\delta(t)$ fulfilling $|\delta(t)| \leq \bar D$, and the measurements $\zeta_1(kT)$ are the only known information from the system. 
   Assume that the control accuracy 
   \begin{align}
      ||\zeta_1(kT)|| \leq M,\,\forall k\geq k_c,
   \end{align}
   is achieved for some $k_c>0$.
   The accuracy $M$ of any causal control sequence $u_k$ and thus also the best possible worst-case accuracy is lower bounded by 
   \begin{align}\label{best-possible-accuracy}
      M \geq M_0 = \frac{3^{p+1}-1}{2^{p+1}(p+1)!}T^{p+1}\bar D.
   \end{align}
\end{prop}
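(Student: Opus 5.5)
The plan is an indistinguishability argument. We construct two admissible disturbance scenarios that produce identical measurements $\zeta_1(iT)$ for all $i\le j$ with $j\ge k_c$, but whose values $\zeta_1((j+1)T)$ differ by at least $2M_0$. Any causal controller computes $u_k$ from the samples up to $kT$ (and the known $r$). Hence it issues the same sequence $u_0,\dots,u_j$ in both scenarios, so on $[0,(j+1)T)$ the closed-loop inputs coincide. Then $\max\{|\zeta_1^a((j+1)T)|,|\zeta_1^b((j+1)T)|\}\ge M_0$, and since the control accuracy must hold in every scenario, $M\ge M_0$.

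First I would reduce the problem to a statement about the disturbance difference alone. From~\eqref{solution-zeta1kp1}, for equal $u_k$,
\begin{align*}
\zeta_1^a((k+1)T)-\zeta_1^b((k+1)T) = \zeta_1^a(kT)-\zeta_1^b(kT) + \int_{kT}^{(k+1)T} e(t)\,\d t,
\end{align*}
with $e=\zeta_2^a-\zeta_2^b$. Since $\zeta_2^{(p)}=\delta$ with $|\delta|\le\bar D$, the difference satisfies $|e^{(p)}|\le 2\bar D$. The initial states of $\zeta_2$ are unknown, so polynomials of degree $<p$ can be added freely. Consequently, identical samples up to $jT$ (starting from equal $\zeta_1(0)$) are equivalent to $\int_{kT}^{(k+1)T}e\,\d t=0$ for all $k<j$. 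The difference at $(j+1)T$ is then $\int_{jT}^{(j+1)T}e\,\d t$. It therefore suffices to find $e$ with $|e^{(p)}|\le 2\bar D$, vanishing interval integrals on all earlier sampling intervals, and
\begin{align*}
\int_{jT}^{(j+1)T} e\,\d t \ge 2M_0 = 2\bar D\int_{T/2}^{3T/2}\frac{s^p}{p!}\,\d s .
\end{align*}
This identity explains the constant: $\frac{3^{p+1}-1}{2^{p+1}(p+1)!}T^{p+1}=\int_{T/2}^{3T/2}\frac{s^p}{p!}\d s$. Taking $\delta^a=-\delta^b$ with $\zeta_2^b=-\zeta_2^a$, the task becomes constructing a single $\zeta_2^a$ with $|\zeta_2^{a(p)}|\le\bar D$. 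Because it can be placed at an arbitrarily late $j$, the requirement $k\ge k_c$ poses no difficulty.

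The main obstacle is this construction. The natural candidate is $e(t)=2\bar D(t-t_0)^p/p!$ with $t_0=jT-T/2$, i.e.\ switching on the maximal $p$-th derivative half a sample before $jT$. This would give exactly $2M_0$ on the interval $[jT,(j+1)T]$. However, it leaves a nonzero integral $2\bar D\int_0^{T/2}s^p/p!\,\d s$ on $[(j-1)T,jT]$, which would be visible in the sample at $jT$. I would first try to cancel all earlier interval integrals by adding a degree-$(p-1)$ polynomial, which costs nothing in $e^{(p)}$. Where that is insufficient, I would let $e^{(p)}$ switch sign at midpoints of earlier intervals, using the symmetry about $t_0$ so that the contributions on past intervals cancel. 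The induction over $p$ via the repeated-integral representation of $\rho(t)$ from the proof of Lemma~\ref{lemma-nominal-discretization} should control these terms.

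If exact cancellation fails, I would fall back on the weaker requirement that both scenarios merely keep $|\zeta_1|\le M$ at all samples. In that version the two scenarios need not produce identical samples, and one instead bounds the accumulated sample mismatch against $M$ to recover the same constant. As a sanity check, for $p=1$ the bound gives $M_0=T^2\bar D$, consistent with the super-twisting result of~\cite{Seeber2025}.
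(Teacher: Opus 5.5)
Your framework and reduction are the same as the paper's. You take two scenarios $\zeta_2^{ii}=-\zeta_2^{i}$ with identical samples up to $jT$, $j\ge k_c$, so a causal controller applies the same $u_0,\dots,u_j$, and the two values of $\zeta_1((j+1)T)$ differ by $2\int_{jT}^{(j+1)T}\zeta_2^{i}\,\mathrm{d}t$.

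The gap is that the one nontrivial ingredient is never supplied. You need a signal $z$ that is the $p$-fold integral of some $\delta$ with $|\delta|\le\bar D$ (so $z\in C^{p-1}$ across the sampling instants), has zero mean on every sampling interval before $jT$, and satisfies $\int_{jT}^{(j+1)T}z\,\mathrm{d}t\ge M_0$. None of your sketched routes produces it:
\begin{itemize}
\item \emph{Polynomial correction.} A polynomial of degree $<p$ with zero mean on $p$ consecutive intervals vanishes identically. So once $j>p$ it cannot cancel the leftover $\bar D\int_0^{T/2}s^p/p!\,\mathrm{d}s$ on $[(j-1)T,jT]$.
\item \emph{Symmetric midpoint pieces.} This is essentially the paper's construction: $z=(-1)^{K-k-1}\frac{\bar D}{p!}\sgnpow{t-kT-\frac{T}{2}}{p}$ on $[kT,(k+1)T)$ for $k<K$, continued by $\frac{\bar D}{p!}(t-(K-1)T-\frac{T}{2})^p$. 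It is admissible for $p\le 1$ (a triangle wave for $p=1$). For $p\ge 2$, however, neighbouring pieces meet with $\dot z$ jumping by $\frac{2\bar D}{(p-1)!}(\frac{T}{2})^{p-1}$, so $z^{(p)}$ contains impulses. Continuity plus an a.e.\ bounded $p$-th derivative is not enough.
\item \emph{Repairing the smoothness.} If you enforce $C^{p-1}$ while keeping that sign pattern, $z$ is pinned to an Euler-spline-type function. For $p=2$ it has $z(jT)=\pm\frac{T^2\bar D}{8}$ and $\dot z(jT)=0$, so the last-interval integral is at most $\frac{7}{24}T^3\bar D<M_0=\frac{13}{24}T^3\bar D$.
\item \emph{Fallback.} The fallback you mention is not yet an argument.
\end{itemize}

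For $p\ge 2$ you can close the gap by dropping symmetry and using a single switch one interval before $jT$. Set $\tau=2^{-1/(p+1)}$ and take $z\equiv 0$ on $[0,(j-1)T]$, $z^{(p)}=-\bar D$ on $[(j-1)T,\,jT-\tau T)$, and $z^{(p)}=+\bar D$ on $[jT-\tau T,\,(j+1)T]$. Then
$\int_{(j-1)T}^{jT}z\,\mathrm{d}t=\int_{(j-1)T}^{jT}\frac{(jT-s)^p}{p!}z^{(p)}(s)\,\mathrm{d}s=\frac{T^{p+1}\bar D}{(p+1)!}\bigl(2\tau^{p+1}-1\bigr)=0$,
while
$\int_{jT}^{(j+1)T}z\,\mathrm{d}t=\frac{2T^{p+1}\bar D}{(p+1)!}\bigl((1+\tau)^{p+1}-2^{p}\bigr)\ge\frac{2^{p+1}(\sqrt{2}-1)}{(p+1)!}T^{p+1}\bar D\ge M_0$.
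The first inequality is AM--GM, $(1+\tau)^{p+1}\ge 2^{p+1}\tau^{(p+1)/2}=2^{p+1}/\sqrt{2}$. The second uses $2^{p+1}(\sqrt{2}-1)\ge(3^{p+1}-1)/2^{p+1}$, valid for $p\ge 2$; for $p=2$ the construction gives about $0.590$ versus $0.542$ in units of $T^3\bar D$.

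So take $\zeta_2^{i}=z$ and $\zeta_2^{ii}=-z$ for $p\ge 2$, and the paper's construction for $p\le 1$. With that, your indistinguishability argument goes through.
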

\begin{pf}
   Consider the two disturbance signals
   \begin{align}
      \zeta^{i}_2(t) &= (-1)^{K-k-1}\frac{\bar D}{p!}\sgnpow{t-kT-\frac{T}{2}}{p},\,\nonum
      &\qquad\forall t\in[kT,(k+1)T),\,\forall k<K,\nonum
      \zeta^{i}_2(t) &= \frac{\bar D}{p!}\sgnpow{t-(K-1)T-\frac{T}{2}}{p},\,\forall t \geq KT,\nonum
      \zeta^{ii}_2(t) &= -\zeta^{i}_2(t),\,\forall t>0,
   \end{align}
   with some $K\geq k_c$.
   An example of these signals is depicted in Figure~\ref{fig:accuracy-bound}.
   Both signals, $\zeta^{i}_2(t)$ and $\zeta^{ii}_2(t)$, are continuous and have a $p$-th derivative existing almost everywhere that is bounded by $\bar D$.
   Thus, they can both be generated by the disturbance model $\zeta_2^{(p)}(t) = \delta(t)$ from~\eqref{nominal-zeta-system} with some input $\delta(t)$ and initial values $\zeta_2(0), \dots, \zeta^{(p-1)}_2(0)$.
   These two disturbances fulfill
   \begin{align}\label{special-disturbances-zero}
      \int_{kT}^{(k+1)T} \zeta_2^{i}(\tau) \d \tau = 0,\quad\forall k<K,\nonum
      \int_{kT}^{(k+1)T} \zeta_2^{ii}(\tau) \d \tau = 0,\quad\forall k<K,
   \end{align}
   Furthermore, these disturbance signals fulfill
   \begin{align}\label{special-disturbances-not-zero}
      &\int_{KT}^{(K+1)T} \zeta_2^{i}(t) \d t = -\int_{KT}^{(K+1)T} \zeta_2^{ii}(t) \d t = \nonum
      &= \int_{0}^{T}  \frac{\bar D}{p!}\left(t+\frac{T}{2}\right)^p \d t = \frac{3^{p+1}-1}{2^{p+1}(p+1)!}T^{p+1}\bar D = M_0.
   \end{align}
   With any control sequence $u_k$ the same samples $\zeta_1(kT)$ are achieved with $\zeta_2(t) = \zeta_2^{i}(t)$ and $\zeta_2(t) = \zeta_2^{ii}(t)$ for ${k<K}$ due to~\eqref{special-disturbances-zero}.
   Consider the solution~\eqref{solution-zeta1kp1} at the sampling instance $t=(K+1)T$.
   If $\zeta_1(KT) + b u_k \geq 0$, then ${\zeta_1((K+1)T) \geq M_0}$ in case $\zeta_2(t) = \zeta_2^{i}(t)$ due to~\eqref{special-disturbances-not-zero}. 
   Otherwise, if $\zeta_1(KT) + b u_k < 0$, then $\zeta_1((K+1)T) \leq -M_0$ in case $\zeta_2(t) = \zeta_2^{ii}(t)$.
   Thus, $M_0$ is a lower bound for the best possible worst-case accuracy.
   \qed
\end{pf}

\begin{figure}
   \centering
   \includegraphics[width=1\linewidth]{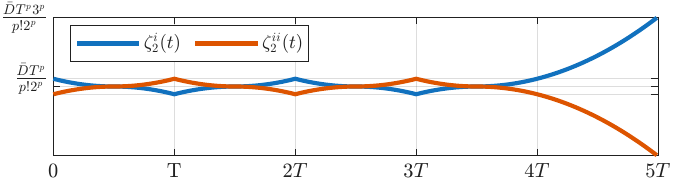}
   \caption{Disturbance signals $\zeta_2^{i}(t)$ and $\zeta_2^{ii}(t)$ from the proof of Proposition~\ref{prop:best-possible-accuracy} for $p=2$ and $K=4$.}\label{fig:accuracy-bound}
\end{figure}

For orders $p\in\{0,1,\dots,6\}$ the lower bounds $M_0$ according to Proposition~\ref{prop:best-possible-accuracy} rounded to two decimals are listed in Table~\ref{tab:accuracy-bounds}.
\begin{table}
   \centering
   \begin{tabular}{|c | c | c| c| c| c| c| c|}
      \hline
      $p$ & 0 & 1 & 2 & 3 & 4 & 5 & 6
      \\\hline
      $M_0/(T^{p+1}\bar D)$ & 1 & 1 & 0.54 & 0.21 & 0.06 & 0.02 & 0.00\\
      \hline
   \end{tabular}
   \caption{Lower bounds $M_0$ on the best possible worst-case accuracy for small orders $p$.}\label{tab:accuracy-bounds}
\end{table}

\begin{rem}
   The accuracy of the proposed discrete-time controller proved in Theorem~\ref{theorem1} is of order $T^{p+1}$ and scales with the disturbance bound $D$.
   This is a condition for a proper implicit discretization, as it was considered by~\cite{Seeber2023}. 
   Also, it is a better accuracy than achieved with explicit Euler discretization, which does not scale with~$D$, as shown by~\cite{Levant2011,Livne_2014}.
   According to Proposition~\ref{prop:best-possible-accuracy}, this accuracy is the best possible accuracy that can be achieved with a discrete-time controller if $p\leq 1$.
   For $p=1$ this was also shown in~\cite[Proposition 2.1]{Seeber2025}.
   However, for $p>1$, according to Proposition~\ref{prop:best-possible-accuracy}, there might exist a discrete-time controller that achieves a smaller accuracy band than $T^{p+1}D$.
\end{rem}

\section{Simulation Examples}\label{sec:simulations}

In this section we demonstrate the effectiveness of the proposed proper implicit discretization scheme of the RESMC in multiple examples.
The in Section~\ref{sec:implicit-discretization} proposed proper implicit RESMC (denoted as \emph{implicit RESMC}) is compared with %
an explicit Euler discretized RESMC (denoted as \emph{explicit RESMC}).
The explicit RESMC is constructed with a similar compensation of the non-nominal 
dynamics as in the proper implicit RESMC, that is outlined in the following.
We used the same approximative solution~\eqref{approximative-Gamma-solution} of the $\boldGamma$-subsystem in~\eqref{general-case-system}.
However, different to the backward Euler approach used in~\eqref{solution-J-subsystem}, we apply a forward Euler scheme for the $\J\boldzeta$-subsystem in~\eqref{general-case-system}, which gives
$%
    \boldzeta_{k+1} = \hat\boldzeta^\prime_{k+1} + T\J \boldzeta^\prime_{k+1}.
$
The controller was then redesigned in discrete-time to reproduce the explicit Euler discretized continuous-time error dynamics~\eqref{continuous-error-dynamics}.%

The system to be controlled consisting of the output dynamics~\eqref{output-dynamics} and the disturbance model~\eqref{disturbance-model} was simulated with a variable-step size solver using a maximum step size of $T/100$, and the disturbance model was initialized at $\varphi(0) = 0$ in all following simulations.

As a proof of concept of the proposed RESMC, and to demonstrate the analytical results from Section~\ref{sec:implicit-discretization}, the following high-order simulation was performed. 
We consider the nominal case, which consists of the output dynamics~\eqref{output-dynamics} with $a=0$, $b=1$, that are disturbed by an integrator chain with length $p=6$.
The discretization time was set to $T=0.1$, and
the bound on the disturbance was set to $D = 10$.
The given case of the integrator chain with $b=1$ yields $\mu = 1$, i.e., $\bar D = D$.
The plant was initialized at $x_1(0) = 1$.
The reference was selected zero in this example.
Finally, the disturbance $\delta(t)$ was once set to zero, and once selected as the rectangular signal $\delta(t) = \bar D \sign{\cos(2t)}$.
Note that this rectangular signal leads to a diverging disturbance signal $w(t)$, and thus also the control signal $u(t)$ must diverge to counteract the disturbance.
The controller gains were set to
$%
\k^\top = \begin{bmatrix}
    13.90 & 92.08 & 295.31 & 380.07 & 226.08 & 71.32 & 11.00
\end{bmatrix}$, following the suggested tuning proposed by~\cite{Levant2018}.
\begin{figure}[h!]
    \centering
    \includegraphics[width=1\linewidth]{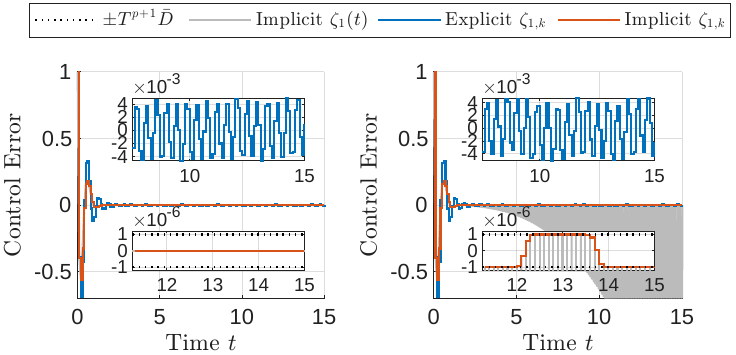}
    \caption{Controlled nominal system with $p=6$, no disturbance (left) and rectangular disturbance (right).}
    \label{simulation-nominal-system}
\end{figure}
The results are shown in Fig.~\ref{simulation-nominal-system}.
Pay attention to the different scaling factors in the zoomed portions of the results.
It is clearly visible that the \emph{explicit RESMC} leads to significant discretization chattering for both disturbances.
In the case with a disturbance, the amplitude of the remaining control error with \emph{explicit RESMC} is larger by a factor of $10^3$ compared to the \emph{implicit RESMC}.
With the diverging disturbance a significant control deviation occurs in between sampling instances with both controllers.
This is inevitable due to the piecewise constant control strategy and the diverging disturbance signal $w(t)$. 
However, when evaluating the control error $\zeta_1(t)$ at the sampling instances $kT$, this deviation is successfully mitigated.
The sampled errors $\zeta_{1,k}$ with the \emph{implicit RESMC} in this nominal case stay within the bounds proved in Theorem~\ref{theorem1}, also with the rectangular disturbance.

A second system considered in simulation is the mechanical system of rotating inertias that was described by~\cite{AndritschCDC2025}, to which the reader is referred for details.
The system parameters in~\eqref{output-dynamics},~\eqref{disturbance-model} take the form
\begin{align}
    a &= -1/3,\,\, b = 1/3,\,\, p=2, \,\, \boldgamma = \begin{bmatrix}
        1&0
    \end{bmatrix}^\top,\nonum
    \S &= \begin{bmatrix}
        -3/2 & 1\\2 & -2
    \end{bmatrix},\,\, \f = \begin{bmatrix}
        1/2\\0
    \end{bmatrix},\,\, \boldd = \begin{bmatrix}
        0\\1
    \end{bmatrix}.
\end{align}
The discretization time was set to $T=0.1$, and
the bound on the disturbance was set to $D = 2$.
The given system parameters yield $\mu = \frac{1}{3}$, i.e. $\bar D = \frac{1}{3}D$.
The plant was initialized at $x_1(0) = 1$.
The reference was selected as $r(t) = \sin(t)$ in this example.
Finally, the disturbance $\delta(t)$ was once set to zero, and once selected as the rectangular signal $\delta(t) = D \sign{\cos(2t)}$.
The controller gains were selected as
$\k^\top = \begin{bmatrix}
    2.20 & 2.57 & 1.47
\end{bmatrix}$, following the suggested tuning proposed by~\cite{Levant2018}.

\begin{figure}[h!]
    \centering
    \includegraphics[width=1\linewidth]{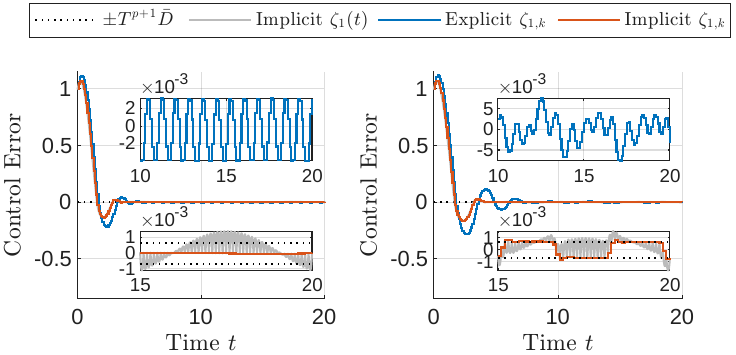}
    \caption{Reference tracking error of the system of rotating inertias, no disturbance input (left), and rectangular disturbance input (right).}
    \label{simulation-general-1}
\end{figure}

\begin{figure}[h!]
    \centering
    \includegraphics[width=1\linewidth]{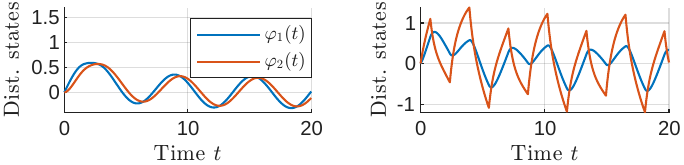}
    \caption{Disturbance states acting in the system of rotating inertias with \emph{implicit RESMC}, no disturbance input (left) and rectangular disturbance input (right).}
    \label{simulation-general-2}
\end{figure}

The results are depicted in Figs.~\ref{simulation-general-1} and~\ref{simulation-general-2}.
Again, also in this general case, the system under \emph{explicit RESMC} exhibits significant discretization chattering.
The remaining control error at the sampling instances with the \emph{implicit RESMC} almost stays within the proved bounds of the nominal system from Theorem~\ref{theorem1}, and does not exhibit high-frequency chattering.
As the disturbance model is coupled with the plant and the reference ${r(t) \not \equiv 0}$, there is a non-zero disturbance $w(t) = \boldgamma \boldphi(t)$ active in the system even in the case of the disturbance input $\delta(t) \equiv 0$.

\section{Conclusion}\label{sec:conclusion}

In this paper we presented a robust exact sliding-mode controller together with its proper implicit discretization that is able to reject disturbances generated by an LTI model with bounded input. %
The discretization is obtained by redesigning the controller in discrete time so that the implicit‑Euler closed‑loop error dynamics of the continuous‑time design are reproduced in transformed coordinates.
In the nominal case the accuracy of the discretization is related to the best possible worst-case accuracy by a factor depending only on the disturbance model order, independent of the sampling time.
Simulations confirm the theory.
Contrary to common belief, very small discretization times are not strictly necessary when sliding-mode control exploits advanced control concepts %
and implementation schemes.

\bibliographystyle{elsarticle-harv} 
\bibliography{bib}

\end{document}